\documentclass[review]{elsarticle}

\usepackage{amsmath,
            amsthm,
            endfloat,
            mathrsfs,
            multirow}


\usepackage{tikz}
\usetikzlibrary{arrows}
\tikzstyle{vertex} = [circle,fill=black!0,minimum size=4pt,inner sep=1pt]
\tikzstyle{every node}=[transform shape]


\usepackage[section]{algorithm}
\usepackage{algorithmic}
\newcommand{\algorithmicinputdata}{\textbf{Input: }}
\newcommand{\algorithmicresult}{\textbf{Output: }}


\newtheorem{theorem}{Theorem}
\newtheorem{lemma}{Lemma}
\newdefinition{definition}{Definition}

\newcommand{\eg}{e.g.\ }
\newcommand{\ie}{i.e.\ }
\newcommand{\resp}{resp.\ }


\title{Polynomial-time sortable stacks of burnt pancakes}
\author{Anthony Labarre}
\ead{labarre.anthony@gmail.com}
\author{Josef Cibulka\fnref{fn1}}
\ead{cibulka@kam.mff.cuni.cz}
\address{Department of Applied Mathematics, Charles University, Malostransk\'e n\'am.~25, 118~00 Prague, Czech Republic. }
\fntext[fn1]{Josef Cibulka was supported by the project 1M0545
of the Ministry of Education of the Czech Republic
and by the Czech Science Foundation under the contract no.\ 201/09/H057.}

\begin{document}

\begin{abstract}
Pancake flipping, a famous open problem in computer science, can be formalised as the problem of sorting a permutation of positive integers using as few \emph{prefix reversals} as possible. In that context, a prefix reversal of length $k$ reverses the order of the first $k$ elements of the permutation. The \emph{burnt} variant of pancake flipping involves permutations of \emph{signed} integers, and reversals in that case not only reverse the order of elements but also invert their signs. Although three decades have now passed since the first works on these problems, neither their computational complexity nor the maximal number of prefix reversals needed to sort a permutation is yet known. In this work, we prove a new lower bound for sorting burnt pancakes, and show that an important class of permutations, known as ``simple permutations'', can be optimally sorted in polynomial time.
\end{abstract}

\begin{keyword}
Algorithms\sep Combinatorial problems\sep Interconnection networks\sep Sorting \sep Permutations
\end{keyword}

\maketitle


\section{Introduction}

The pancake flipping problem~\cite{dweighter-elementary} consists in finding the minimum number of flips required to rearrange a stack of pancakes that all come in different sizes so that the smallest ends up on top and the largest lies at the bottom. 
The problem can be more formally stated as follows: given an ordering of $\{1,2,\ldots,n\}$, what is the minimum number of \emph{prefix reversals} required to sort these numbers in increasing order (where a prefix reversal of length $k$ reverses the order of the first $k$ elements)? A variant of the problem, known as the \emph{burnt pancake flipping problem}, is concerned with rearranging stacks of pancakes that are burnt on one side, in such a way that the pancakes not only end up rearranged in increasing sizes but also with their burnt side down. Again, a more formal description of the problem is to sort orderings of $\{\pm 1, \pm 2, \ldots, \pm n\}$ using as few prefix \emph{signed} reversals (which not only reverse the order of the first $k$ elements but also invert their signs) as possible.

\citet{gates-bounds} and \citet{gyori-stack} proved the first results on pancake flipping three decades ago, focusing on the number of prefix reversals needed in the worst case (\ie the maximum number of steps required to sort a stack of size $n$), and a tremendous amount of work (see next paragraph for more details) has since been devoted to the study of both variants of the problem. 
However, the computational complexity of the sorting problems or merely computing the minimum number of required steps remains open, as well as that of determining the maximum number of prefix (signed) reversals needed to sort a permutation. The best known approximation ratio is $2$, both in the unsigned case (see \citet{fischer-approx}) 
and in the signed case (see \citet{cohen-burnt}, according to \citet{fischer-approx}). 
A very interesting and original solution to the burnt pancake flipping problem was recently proposed by \citet{haynes-engineering}, who use bacteria to represent permutations, which eventually become antibiotic resistant when they are sorted. However, their model obviously does not yield any combinatorial or algorithmic insight on pancake flipping, and seems therefore of little theoretical help.

Although pancake flipping was introduced as a game, it is worth noting that it has since found applications in parallel computing, leading to the famous ``(burnt) pancake network'' topology which is the Cayley graph of a permutation group generated by prefix (signed) reversals (see \citet{lakshmivarahan-symmetry} for a thorough survey on the use of Cayley graphs as interconnection networks). Another major application of pancake flipping, which has received considerable attention, is in the field of computational biology, where permutations model genomes and reversals correspond to actual mutations by which those genomes evolve. In that setting, reversals are no longer restricted to the prefix of the permutation: they can act on any of its segments. Interestingly enough, more is known about these seemingly more challenging versions than on the original pancake flipping problems: \citet{caprara-sorting} proved that sorting unsigned permutations by arbitrary reversals was NP-hard, but quite surprisingly, \citet{hannenhalli-transforming} proved that the signed version of this problem could be solved in polynomial time. For an extensive survey of the mathematical aspects of genome comparisons by means of large-scale mutations, known as \emph{genome rearrangements}, see for instance \citet{fertin-combinatorics}.

In this paper, we use ideas introduced in a previous paper~\cite{labarre-edit} to prove a new tight lower bound on the minimum number of prefix signed reversals required to sort a permutation, which is also referred to as their \emph{distance}. We also examine an important class of signed permutations, known as ``simple permutations'', which proved crucial in solving the problem of sorting permutations by unrestricted signed reversals in polynomial time (see \citet{hannenhalli-transforming}), and give a polynomial-time algorithm for sorting these permutations optimally, as well as a formula for computing their distance in polynomial time. 


\section{Background}

\subsection{Permutations}

Let us start with a quick reminder of basic notions on permutations (for more details, see \eg \citet{bjorner-combinatorics} and \citet{wielandt-finite}). 

\begin{definition}
A \emph{permutation} of $\{1, 2, \ldots, n\}$ is a bijective application of $\{1, 2$, $\ldots$, $n\}$ onto itself.
\end{definition}

The \emph{symmetric group} $S_n$ is the set of all permutations of $\{1, 2, \ldots, n\}$, together with the usual function composition $\circ$, applied from right to left. We use lower case Greek letters to denote permutations, typically $\pi=\langle\pi_1\ \pi_2\ \cdots\ \pi_n\rangle$, with $\pi_i=\pi(i)$, and in particular write the \emph{identity permutation} as $\iota=\langle 1\ 2\ \cdots\ n\rangle$.

\begin{definition}
The \emph{graph} $\Gamma(\pi)$ of a permutation $\pi$ has vertex set $\{1,2,\ldots$, $n\}$, and  contains an arc $(i,j)$ whenever $\pi_i=j$.
\end{definition}

As Figure~\ref{fig:exemple-graph-of-a-permutation} shows, $\Gamma(\pi)$ decomposes in a unique way into disjoint cycles (up to the ordering of cycles and of elements within each cycle), leading to another notation for $\pi$ based on its \emph{disjoint cycle decomposition}.  For instance, when $\pi=\langle 4\ 1\ 6\ 2\ 5\ 7\ 3\rangle$, the disjoint cycle notation is $\pi=(1,4,2)(3,6,7)(5)$ (notice the parentheses and the commas). 

\begin{figure}[htbp]
  \centering
    \begin{tikzpicture}[scale=1,>=stealth,bend angle=45]
      \tikzstyle{arc} = [draw,->]
      \tikzstyle{every loop} = []
      \node[vertex,draw] (U5) at (0,0) [label=below:$5$] {} ;
      \draw[arc] (U5) to [out=45,in=135,loop] (U5);
      \begin{scope}[xshift=3cm]
        \node[vertex,draw] (U1) at (.66,0) [label=right:$1$] {};
        \node[vertex,draw] (U4) at (-.33,.577) [label=above left:$4$] {};
        \node[vertex,draw] (U2) at (-.33,-.577) [label=below left:$2$] {};
        \draw[arc] (U1) to [out=120,in=0] (U4);
        \draw[arc] (U4) to [out=-120,in=120] (U2);
        \draw[arc] (U2) to [out=0,in=-120] (U1);
      \end{scope}
      \begin{scope}[xshift=7cm]
        \node[vertex,draw] (U3) at (.66,0) [label=right:$3$] {};
        \node[vertex,draw] (U6) at (-.33,.577) [label=above left:$6$] {};
        \node[vertex,draw] (U7) at (-.33,-.577) [label=below left:$7$] {};
        \draw[arc] (U3) to [out=120,in=0] (U6);
        \draw[arc] (U6) to [out=-120,in=120] (U7);
        \draw[arc] (U7) to [out=0,in=-120] (U3);
      \end{scope}
    \end{tikzpicture}
  \caption{The graph of the permutation $\langle 4\ 1\ 6\ 2\ 5\ 7\ 3\rangle=(1,4,2)(3,6,7)(5)$.}
  \label{fig:exemple-graph-of-a-permutation}
\end{figure}
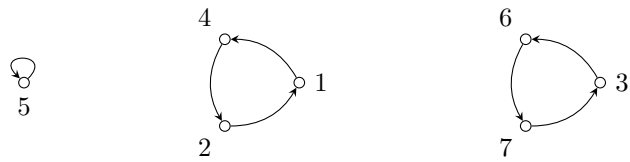

The number of cycles in $\Gamma(\pi)$ is denoted by $c(\Gamma(\pi))$, and the \emph{length} of a cycle is the number of elements it contains. A \emph{$k$-cycle} in $\Gamma(\pi)$ is a cycle of length $k$.

\begin{definition}
A \emph{signed permutation} is a permutation of $\{1, 2, \ldots, n\}$ where each element has an additional ``$+$'' or ``$-$'' sign. 
\end{definition}

The \emph{hyperoctahedral group} $S^{\pm}_n$ is the set of all 
signed permutations of $n$ elements 
with the usual function composition (the usual convention is that $\pi_{-i}=-\pi_i$). It is not mandatory for a signed permutation to have negative elements, so $S_n\subset S^{\pm}_n$ since each permutation in $S_n$ can be viewed as a signed permutation without negative elements in $S^\pm_n$. To lighten the presentation, we will conform to the tradition of omitting ``$+$'' signs when elements are positive.

\subsection{Operations on permutations}\label{sec:operations}

We now review a number of operations on permutations, which are themselves modelled as permutations.

\begin{definition}\label{def:exchange}
An \emph{exchange} $\varepsilon(i, j)$ with $1\leq i<j\leq n$ is a permutation that swaps elements in positions $i$ and $j$:
$$
\varepsilon(i, j)=\left(
\begin{array}{l}\renewcommand{\arraystretch}{1.3}
1\ \cdots\ i-1\ \fbox{$i$}\ i+1\ \cdots\ j-1\ \fbox{$j$}\ j+1\ \cdots\ n\\
\raisebox{-.05in}{$1\ \cdots\ i-1\ \fbox{$j$}\ i+1\ \cdots\ j-1\ \fbox{$i$}\ j+1\ \cdots\ n$}
\end{array}
\right).
$$
\end{definition}

\begin{definition}
A \emph{reversal} $\rho(i, j)$ with $1\leq i<j\leq n$ is a permutation that reverses the order of elements between positions $i$ and $j$:
$$
\rho(i, j)=\left(
\begin{array}{l}\renewcommand{\arraystretch}{1.3}
1\ \cdots\ i-1\ \underline{i\ \ i+1\ \cdots\ j-1\ j}\ j+1\ \cdots\ n\\
\raisebox{-.05in}{$1\ \cdots\ i-1\ j\ j-1\ \cdots\ i+1\ \ i\ j+1\ \cdots\ n$}
\end{array}
\right).
$$
\end{definition}

\begin{definition}
A \emph{signed reversal} $\overline{\rho}(i, j)$ with $1\leq i\leq j\leq n$ is a permutation that reverses both the order and the signs of elements between positions $i$ and $j$:
$$
\overline{\rho}(i, j)=\left(
\begin{array}{ccc}\renewcommand{\arraystretch}{1.3}
1\ \cdots\ i-1 & \underline{i\ \ \ \ \ i+1\ \ \ \ \ \ \cdots\ \ \ \ \ j-1\ \ \ \ \ j} & j+1\ \cdots\ n\\
1\ \cdots\ i-1 & -j\ -(j-1)\ \cdots\ -(i+1)\ -i     & j+1\ \cdots\ n
\end{array}
\right).
$$
\end{definition}

Each operation $\sigma$ transforms a permutation $\pi$ into a permutation $\pi\circ\sigma$. Setting $i=1$ in the above definitions turns those operations into \emph{prefix operations}, \ie operations whose action is restricted to the initial segment of the permutation. It can be easily seen that the effect of any operation can be mimicked by at most three prefix operations of the same kind. 
We are interested in the following two problems on permutations.

\begin{definition}
Given a permutation $\pi$ in $S^{\pm}_n$ and a set $X\subseteq S^{\pm}_n$ of allowed transformations, the problem of \emph{sorting $\pi$ by $X$} is that of finding a minimum-length sequence of elements of $X$ that transforms $\pi$ into $\iota$. The \emph{distance} of $\pi$ (with respect to $X$) is the length of such a sequence.
\end{definition}

The operations we have presented give rise to the \underline{e}xchange \underline{d}istance (denoted by $ed(\pi)$), the \underline{r}eversal \underline{d}istance (denoted by $rd(\pi)$) and the \underline{s}igned \underline{r}eversal \underline{d}istance (denoted by $srd(\pi)$), respectively, as well as the corresponding prefix variants (namely $ped(\pi)$, $prd(\pi)$ and $psrd(\pi)$). Table~\ref{tab:results-on-sorting-and-distances} summarises a selected portion of the current state of knowledge about these distances and the corresponding sorting problems.

\begin{table}[htbp]
\centering
\begin{tabular}{c|l|c|c|c}
& Operation       & Sorting      & Distance & Best approximation \\
\hline
& exchange        & \multicolumn{2}{c|}{$O(n)$~\cite{knuth-art}}                   & $1$\\
& reversal        & \multicolumn{2}{c|}{NP-hard~\cite{caprara-sorting}}            & $11/8$~\cite{berman-better}\\
& signed reversal & $O(n^{3/2})$~\cite{han-improving} & $O(n)$~\cite{bader-linear} & $1$\\
\hline
\hline
\multirow{3}{*}{\rotatebox{90}{prefix}} & exchange & \multicolumn{2}{c|}{$O(n)$~\cite{akers-star}} & $1$ \\
& reversal        & \textbf{?} & \textbf{?} & $2$~\cite{fischer-approx}\\
& signed reversal & \textbf{?} & \textbf{?} & $2$~\cite{cohen-burnt}\\
\end{tabular}
\caption{Some results on sorting permutations using various operations.}
\label{tab:results-on-sorting-and-distances}
\end{table}

\subsection{The breakpoint graph}

\citet{bafna-genome} introduced the following graph, which turned out to be an extremely useful tool to sort permutations by (possibly signed) reversals.

\begin{definition}
Given a signed permutation $\pi$ in $S^\pm_n$, transform it into an unsigned permutation $\pi'$ in $S_{2n}$ by mapping $\pi_i$ onto the sequence $(2\pi_i-1,2\pi_i)$ if $\pi_i>0$, or $(2|\pi_i|,2|\pi_i|-1)$ if $\pi_i<0$, for $1\leq i\leq n$. The \emph{breakpoint graph} of $\pi'$ is the undirected bicoloured graph $BG(\pi)$ with ordered vertex set $(\pi'_0=0,\pi'_1, \pi'_2, \ldots, \pi'_{2n},\pi'_{2n+1}=2n+1)$ and whose edge set consists of:
\begin{itemize}
\item black edges $\{\pi'_{2i}, \pi'_{2i+1}\}$ for $0\leq i\leq n$;
\item grey edges $\{\pi'_{2i}, \pi'_{2i}+1\}$ for $0\leq i\leq n$.
\end{itemize}
\end{definition}

Figure~\ref{fig:breakpoint-graph-example} shows an example of a breakpoint graph. Since each vertex in that graph has degree two, the breakpoint graph decomposes in a single way into \emph{alternating cycles}, \ie cycles that alternate black and grey edges. It can be easily seen that the breakpoint graph shown in Figure~\ref{fig:breakpoint-graph-example} decomposes into two such cycles.

\begin{figure}[htbp]
\centering
\begin{tikzpicture}[scale=.7,>=stealth]
    \foreach \p in {1, 3, ..., 17}
        \draw (\p,0) -- (\p-1,0);

    \foreach \p/\l in {1.5/-7, 3.5/3, 5.5/-1, 7.5/4, 9.5/2, 11.5/8, 13.5/-6, 15.5/-5}
        \node at (\p,-1.2) {$\l$};

    \foreach \b/\e in {0/6, 5/9, 10/3, 4/7, 8/16, 15/14, 13/2, 1/11, 12/17}
        \draw[gray] (\b,0) to  [out=90,in=90] (\e,0);

    \foreach \p/\l in {0/0, 1/14, 2/13, 3/5, 4/6, 5/2, 6/1, 7/7, 8/8, 9/3, 10/4, 11/15, 12/16, 13/12, 14/11, 15/10, 16/9, 17/17}
        \node[vertex,draw] at (\p,0) [label=below:$\l$] {};
\end{tikzpicture}
\caption{The breakpoint graph of $\langle -7\ 3\ -1\ 4\ 2\ 8\ -6\ -5\rangle$.} 
\label{fig:breakpoint-graph-example}
\end{figure}
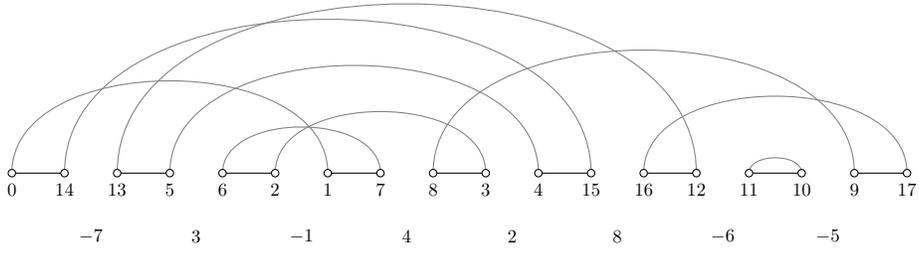

\begin{definition}\label{def:orientation-of-grey-edges}
\cite{bergeron-elementary} 
The \emph{support} of a grey edge $\{\pi'_i, \pi'_j\}$, with $i<j$, is the interval of $\pi'$ delimited by $i$ and $j$, endpoints included. A grey edge is \emph{oriented} if its support contains an odd number of elements, and \emph{nonoriented} otherwise. A cycle in $BG(\pi)$ is \emph{oriented} if it contains an oriented edge, and \emph{nonoriented} otherwise.
\end{definition}

For example, the grey edge that connects $0$ and $1$ in Figure~\ref{fig:breakpoint-graph-example} is oriented, while the one that connects $4$ and $5$ is nonoriented. The \emph{length} of a cycle in a breakpoint graph is the number of black edges it contains, and a \emph{$k$-cycle} is a cycle of length $k$. A $k$-cycle is called \emph{trivial} if $k=1$, and \emph{nontrivial} otherwise.

\begin{definition}
A signed reversal $\overline{\rho}(i,j)$ is said to \emph{act} on black edges $\{\pi'_{2i-2}$, $\pi'_{2i-1}\}$ and $\{\pi'_{2j},\pi'_{2j+1}\}$ of $BG(\pi)$. Likewise, it is said to \emph{act on one cycle} (\resp on two cycles) if both black edges on which $\overline{\rho}(i,j)$ acts belong to the same cycle (\resp to two distinct cycles) in $BG(\pi)$.
\end{definition}


\section{A new lower bound for sorting burnt pancakes}

We exploit a connection between the effect of exchanges on $\Gamma(\pi)$ and that of signed reversals on $BG(\pi)$ to derive a new lower bound on the prefix signed reversal distance of any permutation. We will need the following result by \citet{akers-star} on computing the prefix exchange distance.

\begin{theorem}\label{thm:formula-for-ped}
\cite{akers-star} For any $\pi$ in $S_n$, we have
$$ped(\pi)=
n+c(\Gamma(\pi))-2c_1(\Gamma(\pi))-\left\{
\begin{array}{ll}
0 & \mbox{if } \pi_1= 1, \\
2 & \mbox{otherwise},
\end{array}
\right.$$
where $c_1(\Gamma(\pi))$ denotes the number of $1$-cycles in $\Gamma(\pi)$.
\end{theorem}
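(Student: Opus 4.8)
\section*{Proof proposal for Theorem~\ref{thm:formula-for-ped}}

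The plan is to prove the formula by recognising its right-hand side as a potential function $\Phi$ that sandwiches $ped(\pi)$ from both sides. Write $b(\pi)=1$ if $\pi_1\neq 1$ and $b(\pi)=0$ otherwise, and set $\Phi(\pi)=n+c(\Gamma(\pi))-2c_1(\Gamma(\pi))-2b(\pi)$, so that $\Phi(\iota)=0$. I would establish $ped(\pi)=\Phi(\pi)$ by showing (i) that no prefix exchange decreases $\Phi$ by more than $1$, which yields $ped(\pi)\geq\Phi(\pi)$ since every sorting sequence must bring $\Phi$ down to $0$; and (ii) that from every $\pi\neq\iota$ some prefix exchange decreases $\Phi$ by exactly $1$, which exhibits a sorting strategy of length $\Phi(\pi)$ and hence gives $ped(\pi)\leq\Phi(\pi)$.

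The technical heart is to understand how a single prefix exchange $\varepsilon(1,j)$ acts on the invariants appearing in $\Phi$. Since $\pi\circ\varepsilon(1,j)$ differs from $\pi$ only by interchanging the images of positions $1$ and $j$, in $\Gamma(\pi)$ it swaps the out-arcs leaving $1$ and $j$; by the standard transposition argument this splits a cycle (so $c$ increases by $1$) when $1$ and $j$ lie in the same cycle, and merges two cycles (so $c$ decreases by $1$) otherwise. Only positions $1$ and $j$ can change their fixed-point status, so writing $f_1,f_j\in\{-1,0,1\}$ for the change of the indicator ``this position is a fixed point'' at $1$ and at $j$, one has $\Delta c_1=f_1+f_j$ and $\Delta b=-f_1$. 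Substituting these into $\Phi$ makes the $f_1$ contributions cancel and leaves the simple identity $\Delta\Phi=\Delta c-2f_j$.

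For the lower bound I would read off from $\Delta\Phi=\Delta c-2f_j$ that $\Delta\Phi\in\{-3,-1,1,3\}$, and then rule out the value $-3$: it would require simultaneously $\Delta c=-1$ (so $1$ and $j$ lie in different cycles) and $f_j=+1$ (so position $j$ becomes a fixed point, \ie $\pi_1=j$). But $\pi_1=j$ means $\Gamma(\pi)$ contains the arc $1\to j$, forcing $1$ and $j$ into the same cycle, a contradiction. Hence $\Delta\Phi\geq-1$ always, which is precisely what the lower bound needs.

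For the upper bound I would give an explicit greedy rule achieving $\Delta\Phi=-1$ at each step: if $\pi_1\neq 1$, apply $\varepsilon(1,\pi_1)$, which places the value $\pi_1$ into position $\pi_1$ (so $f_j=+1$) while splitting that cycle ($\Delta c=+1$); if $\pi_1=1$ but $\pi\neq\iota$, pick any $j$ in a nontrivial cycle and apply $\varepsilon(1,j)$, merging the fixed point $\{1\}$ into that cycle ($\Delta c=-1$ and $f_j=0$). In both cases $\Delta\Phi=-1$, and a nontrivial cycle is guaranteed to exist whenever $\pi\neq\iota$, so the process terminates at $\iota$ after exactly $\Phi(\pi)$ steps. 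I expect the main obstacle to be the bookkeeping of the second paragraph: verifying the split/merge behaviour against the right-to-left composition convention, and checking that the indicator term $b$ interacts with $c_1$ exactly so as to produce the cancellation $\Delta\Phi=\Delta c-2f_j$. Once that identity is in hand, both bounds follow with little further effort.
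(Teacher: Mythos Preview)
Your argument is correct. The potential-function identity $\Delta\Phi=\Delta c-2f_j$ is the right bookkeeping, and your exclusion of $\Delta\Phi=-3$ together with the two-case greedy rule yields both bounds cleanly. (In fact one can also rule out $\Delta\Phi=+3$ by the symmetric argument---if $j$ was a fixed point then $j$ is its own cycle and cannot share a cycle with $1$---so $\Delta\Phi\in\{-1,+1\}$ always; but you do not need this for the proof.)

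As for comparison: the paper does not prove Theorem~\ref{thm:formula-for-ped} at all. It is stated as a known result of Akers and Krishnamurthy and invoked as a black box to motivate the analogous lower bound for prefix signed reversals (Theorem~\ref{thm:lower-bound-on-psrd}). Your write-up therefore supplies a self-contained proof where the paper offers none; the potential-function framing you use is essentially the standard argument for this formula, and it dovetails nicely with the paper's later reasoning, since the analogy drawn in Figure~\ref{fig:analogy-between-exchanges-and-signed-reversals} is precisely that signed reversals act on breakpoint-graph cycles the way exchanges act on $\Gamma(\pi)$.
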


\begin{theorem}\label{thm:lower-bound-on-psrd}
For any $\pi$ in $S^\pm_n$, we have
\begin{eqnarray}\label{eqn:lower-bound-on-psrd}
 psrd(\pi)\geq
n+1+c(BG(\pi))-2c_1(BG(\pi))-\left\{
\begin{array}{ll}
0 & \mbox{if } \pi_1= 1, \\
2 & \mbox{otherwise},
\end{array}
\right.
\end{eqnarray}
where $c_1(BG(\pi))$ denotes the number of $1$-cycles in $BG(\pi)$.
\end{theorem}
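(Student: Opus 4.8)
The plan is to interpret the right-hand side of~\eqref{eqn:lower-bound-on-psrd} as the prefix exchange distance of an auxiliary unsigned permutation and then invoke Theorem~\ref{thm:formula-for-ped}. Concretely, I would attach to $\pi$ a permutation $\sigma=\sigma(\pi)$ in $S_{n+1}$ by identifying the $n+1$ black edges $b_0,b_1,\ldots,b_n$ of $BG(\pi)$, where $b_i=\{\pi'_{2i},\pi'_{2i+1}\}$, with the positions $1,2,\ldots,n+1$, and by letting the alternating cycles of $BG(\pi)$ dictate the disjoint cycles of $\Gamma(\sigma)$: reading off the cyclic succession of black edges along each alternating cycle produces a cycle of $\Gamma(\sigma)$ of the same length. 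The proof then splits into a \emph{static} part, showing that the right-hand side of~\eqref{eqn:lower-bound-on-psrd} equals $ped(\sigma)$, and a \emph{dynamic} part, showing that $ped(\sigma)\le psrd(\pi)$.

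For the static part I would verify that the encoding transports the three quantities appearing in Theorem~\ref{thm:formula-for-ped} verbatim. By construction $c(\Gamma(\sigma))=c(BG(\pi))$ and $c_1(\Gamma(\sigma))=c_1(BG(\pi))$, since a cycle with $k$ black edges becomes a $k$-cycle of $\Gamma(\sigma)$. The first-element condition matches as well: $\pi_1=1$ means that $+1$ expands to $(1,2)$, so that $b_0=\{0,1\}$ coincides with the grey edge $\{0,1\}$ and forms a trivial cycle, which is exactly the situation in which position $1$ is a fixed point of $\sigma$, \ie $\sigma_1=1$; and conversely. Substituting these equalities into Theorem~\ref{thm:formula-for-ped}, applied to $\sigma\in S_{n+1}$ so that its additive term reads $n+1$, turns $ped(\sigma)$ into precisely the right-hand side of~\eqref{eqn:lower-bound-on-psrd}.

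The dynamic part is where the announced connection between exchanges on $\Gamma$ and signed reversals on $BG$ is used. A prefix signed reversal $\overline{\rho}(1,j)$ acts exactly on $b_0$ and $b_j$ and reconnects the four distinct vertices $\pi'_0,\pi'_1,\pi'_{2j},\pi'_{2j+1}$, replacing the black edges $\{\pi'_0,\pi'_1\}$ and $\{\pi'_{2j},\pi'_{2j+1}\}$ by $\{\pi'_0,\pi'_{2j}\}$ and $\{\pi'_1,\pi'_{2j+1}\}$. Under the identification $b_0\leftrightarrow 1$, $b_j\leftrightarrow j+1$, this rewiring parallels the prefix exchange $\varepsilon(1,j+1)$ on $\Gamma(\sigma)$: when $b_0$ and $b_j$ lie in distinct cycles both operations merge them and $c$ drops by $1$, and when they share a cycle the reconnection either splits it ($c$ rises by $1$) or leaves $c$ unchanged, the alternative being governed by whether the reversal is oriented in the sense of Definition~\ref{def:orientation-of-grey-edges}. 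Since $ped(\sigma)$ equals the right-hand side of~\eqref{eqn:lower-bound-on-psrd}, I would prove $ped(\sigma)\le psrd(\pi)$ by showing that a single such reversal lowers this expression by at most one, and then telescope along an optimal sorting sequence, using that the identity gives value $0$.

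The main obstacle is the per-move estimate itself, for which all three outcomes of the reconnection must be controlled uniformly: a merge ($\Delta c=-1$), a split ($\Delta c=+1$), and the remaining \emph{nonoriented} case in which $b_0,b_j$ share a cycle but $c$ is unchanged ($\Delta c=0$) and which no exchange reproduces. The delicate point is not $\Delta c$ but its interaction with the number of trivial cycles and with the first-element term. I would track how the two new black edges can create or destroy $1$-cycles, giving $\Delta c_1\in\{-2,-1,0,1,2\}$, and crucially how the triviality of $b_0$ governs the first-element term; the coupling between these two effects is exactly what the ``$-2$ if $\pi_1\neq 1$'' correction in Theorem~\ref{thm:formula-for-ped} encodes, and checking that in every case the combined change $\Delta c-2\Delta c_1-2\Delta(\text{first-element term})$ is at least $-1$ (with equality in the tight cases, such as a $2$-cycle splitting into two trivial cycles or two trivial cycles merging) is where the bulk of the verification lies. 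The nonoriented case is then the easiest: the reconnected cycle keeps its length $\ge 2$ and $b_0$ stays non-trivial, so none of the three quantities changes and the expression is preserved.
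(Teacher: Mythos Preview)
Your proposal is correct and follows essentially the same route as the paper. The paper's proof is a two-sentence appeal to the analogy between exchanges acting on $\Gamma(\pi)$ and signed reversals acting on $BG(\pi)$, then invokes Theorem~\ref{thm:formula-for-ped}; what you have written is a fleshed-out version of exactly that argument. Your auxiliary permutation $\sigma\in S_{n+1}$ is a clean device for making the ``static'' identification of the right-hand side with $ped(\sigma)$ literal rather than merely analogical, and your per-move case analysis on $(\Delta c,\Delta c_1,\Delta(\text{first-element term}))$ is precisely the content the paper suppresses when it says ``the proof then follows from Theorem~\ref{thm:formula-for-ped}''. Note that once you switch to tracking the expression directly in the dynamic part (as you correctly do, since the nonoriented case has no exchange counterpart), the construction of $\sigma$ becomes optional: you are in effect re-running the lower-bound half of the proof of Theorem~\ref{thm:formula-for-ped} with $BG(\pi)$ in place of $\Gamma(\pi)$, which is exactly what the paper intends.
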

\begin{proof}
The key observation is that the action of signed reversals on the cycles of the breakpoint graph is analogous to the action of exchanges on the cycles of (the graph of) a permutation: both involve at most two distinct cycles, and can create at most one new cycle in the graph on which they act, as Figure~\ref{fig:analogy-between-exchanges-and-signed-reversals} shows. 

\begin{figure}[htbp]
\centering
\begin{tabular}{c|c|c}
\begin{tikzpicture}[>=stealth,scale=.74]
    \node[vertex,draw] at (0,.46) (pi_i) [label=left:$\pi_i$] {};
    \node[vertex,draw] at (2,.46) (pi_j) [label=right:$\pi_j$] {};
    \draw[dotted,->] (pi_i) to [out=90, in=90] (pi_j);
    \draw[dotted,<-] (pi_i) to [out=-90, in=-90] (pi_j);

    \begin{scope}[yshift=-100pt]
        \draw[->] (1,2.96) -- (1,2.16);
        \node[vertex,draw] at (0,.46) (pi_i) [label=left:$\pi_i$] {};
        \node[vertex,draw] at (2,.46) (pi_j) [label=right:$\ \pi_j$] {};
        \node[vertex] at (2,-.2) (dummy)  {};
        \draw[dotted,-] (pi_i) to [out=135, in=45,loop] (pi_i) {};
        \draw[dotted,-] (pi_j) to [out=135, in=45,loop] (pi_j) {};
    \end{scope}
\end{tikzpicture}
&
\begin{tikzpicture}[>=stealth,scale=.74]
    \foreach \p/\l in {0/$\pi'_{2i}$,1/$\pi'_{2i+1}$,4/$\pi'_{2j}$,5/$\pi'_{2j+1}$}
        \node[vertex,draw] (\p) at (\p,0) [label=below:\l] {};

    \foreach \s/\d in {0/1, 4/5}
        \draw (\s) -- (\d);

    \foreach \s/\d in {0/4, 1/5}
        \draw[dotted] (\s) to [out=90,in=90] (\d);

\begin{scope}[yshift=-100pt]
        \draw[->] (2.5,2.5) -- (2.5,1.7);
    \foreach \p/\l in {0/$\pi'_{2i}$,1/$\pi'_{2j}$,4/$\pi'_{2i+1}$,5/$\pi'_{2j+1}$}
        \node[vertex,draw] (\p) at (\p,0) [label=below:\l] {};

    \foreach \s/\d in {0/1, 4/5}
        \draw (\s) -- (\d);

    \foreach \s/\d in {0/1, 4/5}
        \draw[dotted] (\s) to [out=90,in=90] (\d);

\end{scope}
\end{tikzpicture}
&
\begin{tikzpicture}[>=stealth,scale=.74]
    \foreach \p/\l in {0/$\pi'_{2i}$,1/$\pi'_{2i+1}$,4/$\pi'_{2j}$,5/$\pi'_{2j+1}$}
        \node[vertex,draw] (\p) at (\p,0) [label=below:\l] {};

    \foreach \s/\d in {0/1, 4/5}
        \draw (\s) -- (\d);

    \foreach \s/\d in {0/5, 1/4}
        \draw[dotted] (\s) to [out=90,in=90] (\d);

\begin{scope}[yshift=-100pt]
        \draw[->] (2.5,2.5) -- (2.5,1.7);
    \foreach \p/\l in {0/$\pi'_{2i}$,1/$\pi'_{2j}$,4/$\pi'_{2i+1}$,5/$\pi'_{2j+1}$}
        \node[vertex,draw] (\p) at (\p,0) [label=below:\l] {};

    \foreach \s/\d in {0/1, 4/5}
        \draw (\s) -- (\d);

    \foreach \s/\d in {0/5, 1/4}
        \draw[dotted] (\s) to [out=90,in=90] (\d);

\end{scope}
\end{tikzpicture}
 \\
$(a)$ & $(b)$ & $(c)$ 
\end{tabular}

\caption{$(a)$ The effect of an exchange on a cycle of $\Gamma(\pi)$, $(b)$ the effect of a signed reversal on an oriented cycle and $(c)$ on a nonoriented cycle of $BG(\pi)$. Dotted edges in those drawings stand for (alternating) paths.}
\label{fig:analogy-between-exchanges-and-signed-reversals}
\end{figure}
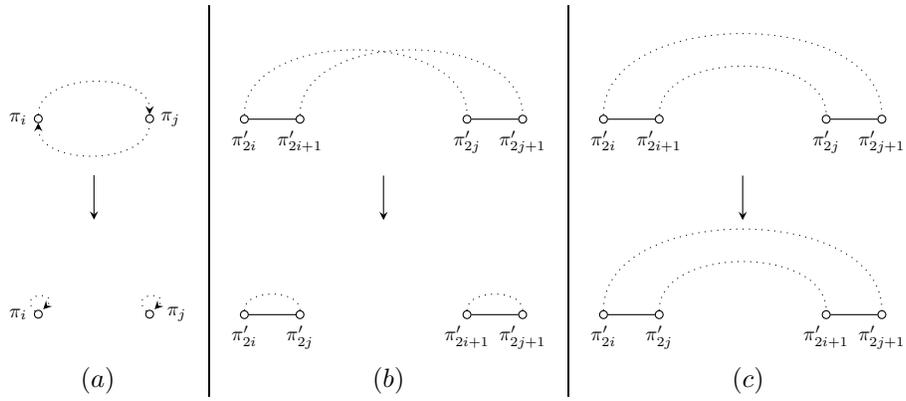

The analogy obviously still holds under the prefix restriction, and the proof then follows from Theorem~\ref{thm:formula-for-ped}.
\end{proof}

Note that, as observed by \citet{hannenhalli-transforming} and as shown in Figure~\ref{fig:analogy-between-exchanges-and-signed-reversals}~$(c)$, it is not always possible to split a cycle in $BG(\pi)$ using a signed reversal, whereas it is always possible to split a cycle in $\Gamma(\pi)$ using an exchange (hence the lower bound instead of an equality).


\section{Sorting simple permutations in polynomial time}

We now turn our attention to an important class of signed permutations, which proved crucial in solving the signed version of sorting by unrestricted reversals in polynomial time (see \citet{hannenhalli-transforming}), and show how to sort those permutations by prefix signed reversals in polynomial time.

\begin{definition}
A signed permutation $\pi$ is \emph{simple} if $BG(\pi)$ contains only cycles of length at most $2$.
\end{definition}

Our analysis is based exclusively on simple permutations; therefore, we need to ensure that the sequences of prefix signed reversals we use will transform any given simple permutation into another simple permutation.

\begin{definition}
A sequence of signed reversals applied to a simple permutation $\pi$ is \emph{conservative} if it transforms $\pi$ into a simple permutation $\sigma$.
\end{definition}

We wish to stress that we only require $\sigma$ to be simple: we allow intermediate permutations obtained in the process of transforming $\pi$ into $\sigma$ not to be simple.

\begin{definition}
Let $g(\pi)$ denote the right-hand side of lower bound~\eqref{eqn:lower-bound-on-psrd}; an \emph{$(x,y)$-sequence} is a sequence of $x$ prefix signed reversals transforming a permutation $\pi$ into a permutation $\sigma$ with $g(\pi)-g(\sigma)=y$. It is \emph{optimal} if $x=y$.
\end{definition}

\subsection{Components of the breakpoint graph}

We will need the following definitions and results of \citet{hannenhalli-transforming}.

\begin{definition}
Two distinct grey edges in the breakpoint graph \emph{interleave} if their supports overlap but do not contain each other. Likewise, two distinct cycles in the breakpoint graph \emph{interleave} if either cycle contains a grey edge that interleaves with a grey edge of the other cycle.
\end{definition}

\begin{definition}\label{def:components-and-orientation-in-BG}
Let $H(\pi)$ be the graph whose vertices are the cycles in $BG(\pi)$ and whose edges connect two vertices if the corresponding cycles interleave. A \emph{component} of $BG(\pi)$ is a connected component of $H(\pi)$; it is \emph{oriented} if a vertex of that component in $H(\pi)$ corresponds to an oriented cycle in $BG(\pi)$, and \emph{nonoriented} otherwise.
\end{definition}

\begin{lemma}\label{lemma:reversals-change-orientation-of-interleaving-cycles}
\cite{hannenhalli-transforming} 
A signed reversal acting on a given cycle $C$ in $BG(\pi)$ changes the orientation of every cycle in $BG(\pi)$ that interleaves with $C$ (\ie it transform every nonoriented (\resp nonoriented) cycle that interleaves with $C$ into a nonoriented (\resp oriented) cycle).
\end{lemma}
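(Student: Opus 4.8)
The plan is to argue entirely at the level of positions in $\pi'$, using the parity reading of Definition~\ref{def:orientation-of-grey-edges}: a grey edge joining the vertices at positions $p<q$ has support of size $q-p+1$, so it is oriented exactly when $p$ and $q$ have the same parity. Writing $\overline{\rho}(i,j)$ for the reversal that acts on the two black edges $\{\pi'_{2i-2},\pi'_{2i-1}\}$ and $\{\pi'_{2j},\pi'_{2j+1}\}$ of $C$, I would first record that its effect on $\pi'$ is to reflect the block of positions $I=\{2i-1,2i,\ldots,2j\}$, sending each $p\in I$ to $(2i-1)+2j-p$ and fixing every position outside $I$. As $(2i-1)+2j$ is odd, this reflection reverses the parity of every position inside $I$ and preserves that of every position outside it.

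From this the local rule is immediate: a grey edge changes orientation under $\overline{\rho}(i,j)$ if and only if exactly one of its two endpoints lies in $I$, since the equal-parity status of the pair is toggled precisely when a single endpoint has its parity flipped.

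The heart of the argument is then to match ``exactly one endpoint in $I$'' with interleaving with $C$. Here I would use that $C$ and $C'$ are vertex-disjoint, so no endpoint of a grey edge $g'$ of $C'$ can sit at any of the four boundary positions $2i-2,2i-1,2j,2j+1$. The two grey edges of $C$ join the endpoints of $b_1=\{\pi'_{2i-2},\pi'_{2i-1}\}$ to those of $b_2=\{\pi'_{2j},\pi'_{2j+1}\}$ in one of the two possible matchings; a short check of both matchings shows that, once those boundary positions are excluded as possible endpoints of $g'$, the three conditions ``$g'$ interleaves the first grey edge of $C$'', ``$g'$ interleaves the second grey edge of $C$'' and ``$g'$ has exactly one endpoint in $I$'' collapse to the single requirement that $g'$ have exactly one endpoint among $\{2i,\ldots,2j-1\}$. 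Hence a grey edge of $C'$ flips orientation if and only if it interleaves with $C$.

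Finally I would lift this from edges to cycles. A parity invariant holds in every breakpoint graph: on traversing an alternating cycle each black edge and each nonoriented grey edge reverses the parity of the current position while each oriented grey edge preserves it, so for the walk to close the number of black plus nonoriented grey edges must be even; as a cycle carries equally many grey and black edges, its number of oriented grey edges is even. Applying this before and after the reversal to any cycle $C'\neq C$ — which is left intact, since $\overline{\rho}(i,j)$ touches only the black edges of $C$ — shows that the number of grey edges of $C'$ that flip orientation is even. For the simple permutations at hand this closes the proof at once: a nontrivial cycle is a $2$-cycle, whose two grey edges therefore share a common orientation and can only flip together, so by the previous step they flip precisely when at least one of them interleaves $C$, \ie exactly when $C'$ interleaves $C$, reversing the orientation of $C'$; a trivial cycle is always nonoriented and interleaves with nothing, hence is never affected. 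I expect the boundary bookkeeping in the third paragraph to be the main obstacle — reconciling open and closed intervals and the two matchings of $C$ into one clean condition — along with extending the last paragraph beyond the simple case, where an even number of edge flips need not reverse the orientation of a long cycle; that general statement is precisely the one established in~\cite{hannenhalli-transforming} and is not needed here.
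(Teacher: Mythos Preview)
The paper does not supply a proof of this lemma; it is quoted from \cite{hannenhalli-transforming} and stated without argument, so there is nothing in the paper to compare your proposal against.

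Your argument is nonetheless correct in the only setting in which the paper invokes the lemma, namely simple permutations (Lemma~\ref{lemma:simple-permutation-leftmost-component-is-a-hurdle}). The parity reading of Definition~\ref{def:orientation-of-grey-edges}, the observation that the reflection $p\mapsto(2i-1)+2j-p$ reverses parity exactly on $I$, and the collapse of both interleaving conditions with the grey edges of a $2$-cycle $C$ to ``exactly one endpoint in $\{2i,\ldots,2j-1\}$'' once the four boundary positions are excluded, all check out. The parity invariant you state---that every alternating cycle carries an even number of oriented grey edges---is also correct, and for a $2$-cycle $C'$ it indeed forces both grey edges to share an orientation and to flip together, which is precisely what you need to pass from edges to cycles. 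You are right to flag that this last step does not extend to longer cycles by parity alone: an even number of edge flips in a cycle with four or more grey edges may or may not change its orientation. The general statement is established in \cite{hannenhalli-transforming} via the overlap graph, but, as you note, the paper only ever applies the lemma to simple permutations, so your restricted version suffices.
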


\begin{lemma}\label{lemma:every-edge-is-interleaved}
\cite{hannenhalli-transforming} 
Every grey edge of a nontrivial cycle in $BG(\pi)$ interleaves with another grey edge.
\end{lemma}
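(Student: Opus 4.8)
The plan is to recast interleaving as an edge-counting problem on $BG(\pi)$ and then to treat oriented and nonoriented grey edges separately. Fix a grey edge $g$ of a nontrivial cycle $C$, let $p<q$ be the positions of its two endpoints, and let $R$ be its support, the block of positions from $p$ to $q$. Since every vertex of $BG(\pi)$ carries exactly one grey edge, two grey edges never share an endpoint; hence a grey edge interleaves with $g$ precisely when it has exactly one endpoint strictly between $p$ and $q$. Writing $I$ for the number of such grey edges, I would count the edges of $BG(\pi)$ with exactly one endpoint in the interior $\{p+1,\ldots,q-1\}$ of the support. This number is even, because every vertex has degree two; the grey edges contributing to it are exactly the $I$ interleaving edges, while the only black edges that can contribute are the black edge joining $p$ to $p+1$ (present iff $p$ is even) and the one joining $q-1$ to $q$ (present iff $q$ is odd). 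A one-line parity check shows that the number of these outward black edges is odd exactly when $p\equiv q\pmod 2$, i.e. exactly when the support of $g$ has odd size; consequently $I$ is odd if $g$ is oriented and even if it is not.

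The oriented case is then immediate: if $g$ is oriented, $I$ is odd, so $I\ge 1$ and $g$ interleaves with some grey edge. This disposes of oriented edges with essentially no work.

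The nonoriented case is the crux, and is where I expect the real difficulty to lie: the parity congruence only gives that $I$ is even, leaving open $I=0$. To rule this out one cannot rely on the cycle structure of $BG(\pi)$ alone; one must use that its vertices arise from a signed permutation. Suppose, for contradiction, that $I=0$. Then no grey edge leaves $R$, so the set $S$ of values occupying the positions of $R$ is closed under the grey pairing that joins consecutive values $2k$ and $2k+1$. The values also carry the adjacency pairs $\{2m-1,2m\}$ that encode the original signed elements, and the grey pairs together with these element pairs chain all of $0,1,\ldots,2n+1$ into a single path. I would then double-count the element pairs straddling the boundary of $R$: counting by positions yields a value fixed by the parities of $p,q$ (and by whether $p$ or $q$ is one of the extreme positions $0$, $2n+1$), whereas counting along the path expresses the same number through the count of maximal runs of $S$. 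Comparing the two counts forces $S$ either to be empty or to be a single interval of consecutive values.

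It then remains to close the argument. An empty $S$ is absurd, since $R$ is nonempty. If $S$ is a single interval of consecutive values, then its smallest and largest values occupy the two extreme positions $p$ and $q$ of $R$, which are precisely the endpoints of $g$; but a grey edge joins two \emph{consecutive} values, so this interval has only two values and $C$ is the trivial $1$-cycle on them, contradicting the nontriviality of $C$. Hence $I\ge 1$ in the nonoriented case as well, and every grey edge of a nontrivial cycle interleaves with another grey edge. The main obstacle, as indicated, is exactly this nonoriented case: the purely combinatorial parity count is insufficient, and the arithmetic structure of the breakpoint graph must be invoked to exclude $I=0$.
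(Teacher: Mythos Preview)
The paper does not prove this lemma; it is quoted from Hannenhalli and Pevzner with a citation and used as a black box, so there is no in-paper argument to compare yours against.

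Evaluated on its own, your plan is correct. The parity computation in the first paragraph goes through exactly as written: the number of edges of $BG(\pi)$ with exactly one endpoint in $\{p+1,\ldots,q-1\}$ is even, the black contribution is $[p\text{ even}]+[q\text{ odd}]$, and this is odd precisely when $p\equiv q\pmod 2$, i.e.\ when $g$ is oriented. So $I$ is odd for oriented $g$ and that case is immediate.

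For the nonoriented case, your double count is in fact tight enough to force $S$ to be a single interval, though it is worth making explicit why. The values $0$ and $2n+1$ sit at the extreme positions $0$ and $2n+1$, so $0\in S$ exactly when $p=0$ and $2n+1\in S$ exactly when $q=2n+1$. If $r$ denotes the number of maximal runs of $S$ along the value path $0\text{--}1\text{--}\cdots\text{--}(2n+1)$, the number of element edges crossing between $S$ and its complement equals $2r$, $2r-1$, or $2r-2$ according as neither, exactly one, or both of $0$ and $2n+1$ lie in $S$. Matching this against the position count of straddling element pairs (which is $2$, $1$, or $0$ in precisely the corresponding sub-cases when $p$ is even and $q$ is odd) forces $r=1$ every time. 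The remaining parity pattern, $p$ odd and $q$ even, gives zero straddling pairs while $0,2n+1\notin S$, hence $2r=0$ and $S=\emptyset$, an immediate contradiction. Your closing step then finishes cleanly: the extreme values of the single interval $S$ are exactly the values at positions $p$ and $q$ (they are the values whose element partners lie outside $S$, and $p,q$ are the positions whose element partners lie outside $R$), hence they are grey-paired, hence consecutive, so $|S|=2$, $q=p+1$ with $p$ even, and $g$ together with the black edge $\{p,p+1\}$ forms a $1$-cycle, contradicting nontriviality.
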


Lemma~\ref{lemma:every-edge-is-interleaved} implies in particular that if $\pi$ is a simple permutation, then every nontrivial nonoriented cycle in $BG(\pi)$ interleaves with another nontrivial cycle. In the following, we sometimes abuse language by saying that we sort a cycle or a component, which actually means that we transform a $k$-cycle or a component involving $k$ black edges in $BG(\pi)$ into a collection of $k$ $1$-cycles.

\subsection{Preliminary results}

In the following, we will refer to the cycle in $BG(\pi)$ that contains black edge $\{\pi'_0, \pi'_1\}$ as the \emph{leftmost cycle}, and to the component that contains the leftmost cycle as the \emph{leftmost component}.

\begin{definition}
A signed reversal is \emph{proper} if it increases the number of cycles in $BG(\pi)$ by one.
\end{definition}

The following observation will be crucial.

\begin{lemma}\label{lemma:mimicking-proper-reversals-is-optimal-for-simple-permutations}
For any simple permutation $\pi$, any minimal sequence of prefix reversals that mimics a proper reversal is both conservative and optimal.
\end{lemma}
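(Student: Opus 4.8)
The plan is to reduce the claim to two separate assertions: that mimicking a proper reversal by prefix reversals is \emph{conservative}, and that it is \emph{optimal}. Recall from Section~\ref{sec:operations} that any reversal, in particular any signed reversal, can be simulated by at most three prefix signed reversals of the same kind, so a proper reversal $\overline{\rho}(i,j)$ on a simple permutation $\pi$ can always be mimicked by some minimal sequence of prefix reversals; the point is to control what such a minimal sequence does to $g(\pi)$ and to the cycle structure of $BG(\pi)$.

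First I would establish conservativeness. Since $\pi$ is simple, every cycle of $BG(\pi)$ has length at most $2$. A proper reversal increases the cycle count by exactly one; acting on a single $2$-cycle it splits it into two $1$-cycles, and acting on two cycles it can only merge-then-split in a way that preserves the at-most-length-$2$ property. The key sub-claim is that the net effect on the cycle \emph{lengths} of any minimal prefix-reversal simulation of a proper reversal never creates a cycle of length $3$ or more, so the resulting permutation $\sigma$ is again simple. Here I would argue that a length-$1$, length-$2$, or length-$3$ prefix simulation factors through the same pair of black edges that the proper reversal acts on, and that the intermediate (possibly non-simple) permutations are irrelevant by the definition of \emph{conservative}, which only constrains the endpoint $\sigma$.

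Next I would establish optimality. By the definition of an $(x,y)$-sequence, it suffices to show that a minimal prefix-reversal simulation of a proper reversal is an $(x,x)$-sequence, \ie that the number $x$ of prefix reversals used equals the drop $g(\pi)-g(\sigma)$ in the lower bound. Since $g$ is the right-hand side of~\eqref{eqn:lower-bound-on-psrd}, its value changes only through $c(BG(\pi))$, through $c_1(BG(\pi))$, and through the indicator of whether $\pi_1=1$. A single prefix signed reversal can change each of these quantities by a bounded amount, so I would compute, step by step along the mimicking sequence, exactly how $g$ decreases per prefix reversal and check that each of the (at most three) prefix reversals contributes a drop of exactly one to $g$, matching $x=y$. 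Crucially, Theorem~\ref{thm:lower-bound-on-psrd} gives $g(\pi)$ as a genuine lower bound on $psrd$, so an $(x,x)$-sequence certifies that each of its prefix reversals was forced, hence the sequence is optimal.

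The main obstacle I anticipate is the bookkeeping in the optimality step: a single prefix signed reversal can simultaneously change $c(BG(\pi))$, the number of $1$-cycles $c_1(BG(\pi))$, and the $\pi_1=1$ indicator, and these contributions appear in $g$ with different coefficients ($+1$ for $c$, $-2$ for $c_1$, and $\pm 2$ for the indicator). I would therefore have to track carefully, across the $1$-, $2$-, or $3$-reversal simulation, how the leftmost black edge and the leftmost cycle evolve, since it is precisely the leftmost position that governs the $\pi_1=1$ term. The analogy of Figure~\ref{fig:analogy-between-exchanges-and-signed-reversals} together with Theorem~\ref{thm:formula-for-ped} should let me transport the exchange-distance computation of \citet{akers-star} to this signed-reversal setting, so that the per-reversal drop of exactly one in $g$ falls out of the corresponding per-exchange drop of one in $ped$.
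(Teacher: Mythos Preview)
Your plan would work but is more laborious than the paper's, and it contains one misstep: a proper reversal cannot act on two cycles (merging decreases the cycle count), so on a simple permutation it necessarily splits a single $2$-cycle into two $1$-cycles, which makes conservativeness immediate since the endpoint $\sigma=\pi\circ\overline{\rho}(i,j)$ then has all cycles of length at most~$2$. For optimality the paper avoids your step-by-step bookkeeping entirely. It observes that $\sigma$ is independent of the chosen prefix factorisation and computes $g(\pi)-g(\sigma)$ once, in two cases: if $i=1$ the reversal is already a prefix reversal, $\sigma$ now fixes $1$, and $g$ drops by exactly $1$; if $i>1$ the first element is untouched, $c$ rises by one and $c_1$ by two, so $g$ drops by exactly $3$, matching the three prefix reversals needed to mimic an interior reversal. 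Since the argument behind Theorem~\ref{thm:lower-bound-on-psrd} already shows that $g$ can fall by at most $1$ per prefix signed reversal, the minimal mimicking length is forced to equal $g(\pi)-g(\sigma)$, and hence \emph{any} minimal sequence is automatically an $(x,x)$-sequence. Your per-move tracking of $c$, $c_1$, and the $\pi_1=1$ indicator is therefore unnecessary, and so is the appeal to the exchange-distance formula of \citet{akers-star} beyond what Theorem~\ref{thm:lower-bound-on-psrd} already provides.
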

\begin{proof}
A proper reversal $\overline{\rho}(i,j)$ on $\pi$ splits a $2$-cycle into two $1$-cycles; if $i=1$, then the resulting permutation $\sigma$ now fixes $1$, and lower bound~\eqref{eqn:lower-bound-on-psrd} has decreased by $1$. Otherwise, the status of the first element in $\pi$ and $\sigma$ is the same, and there is a sequence of three prefix reversals which mimics the effect of $\overline{\rho}(i,j)$
and decreases lower bound~\eqref{eqn:lower-bound-on-psrd} by $3$. Therefore, the resulting sequence of prefix reversals is optimal, and clearly conservative.
\end{proof}

As a result, if $BG(\pi)$ contains an oriented component, then we can sort that component optimally in polynomial time (see \citet{tannier-advances} for more details). Therefore, the only remaining cases we need to examine are the cases where $\pi$ admits no proper reversal, or equivalently where $BG(\pi)$ contains no oriented cycle, distinguishing between the case where $\pi_1\neq 1$ and $\pi_1=1$.

\begin{lemma}\label{lemma:simple-permutation-leftmost-component-is-a-hurdle}
Let $\pi$ be a simple permutation; if $\pi_1\neq 1$ and $BG(\pi)$ contains no oriented cycle, then $\pi$ admits a conservative $(1,0)$-sequence.
\end{lemma}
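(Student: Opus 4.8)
The plan is to exhibit a single prefix signed reversal that acts \emph{within} the leftmost cycle and to show that this reversal is both conservative and neutral for the bound $g$. First I would record the local structure at the left end of $BG(\pi)$. Since $\pi_1\neq 1$, the leftmost black edge $\{\pi'_0,\pi'_1\}$ does not coincide with the grey edge $\{0,1\}$, so it cannot form a $1$-cycle; hence the leftmost cycle $L$ is nontrivial, and because $\pi$ is simple, $L$ is a $2$-cycle. Because $BG(\pi)$ contains no oriented cycle, $L$ is in particular nonoriented. I would then write the two black edges of $L$ as $\{\pi'_0,\pi'_1\}$ and $\{\pi'_{2k},\pi'_{2k+1}\}$ for a suitable $k$ with $1\leq k\leq n$, and consider the prefix signed reversal $\overline{\rho}(1,k)$, which by definition acts on exactly these two black edges, both of which lie in $L$.

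Next I would establish conservativeness. A reversal acting on a single nonoriented cycle neither splits nor merges cycles and leaves that cycle as one cycle of the same length, as Figure~\ref{fig:analogy-between-exchanges-and-signed-reversals}$(c)$ shows; hence $L$ is transformed into another $2$-cycle. Every other cycle of $BG(\pi)$ keeps its length: by Lemma~\ref{lemma:reversals-change-orientation-of-interleaving-cycles} the reversal only flips the orientation of the cycles that interleave with $L$, and orientation is independent of length. Therefore every cycle of $BG(\sigma)$ still has length at most $2$, so $\sigma$ is simple and $\overline{\rho}(1,k)$ is conservative.

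Then I would compute the effect on $g$. Since no cycle was split or merged and no new trivial cycle was created, we have $c(BG(\sigma))=c(BG(\pi))$ and $c_1(BG(\sigma))=c_1(BG(\pi))$, so only the bracketed correction term of \eqref{eqn:lower-bound-on-psrd} can change. This is the delicate point, and the only place where nonorientation of $L$ is genuinely used: the vertex $\pi'_0=0$ still carries a black edge of the modified copy of $L$, so the leftmost black edge $\{\sigma'_0,\sigma'_1\}$ of $\sigma$ belongs to that modified $2$-cycle; the leftmost cycle of $\sigma$ is thus nontrivial, which forces $\sigma_1\neq 1$. Consequently the correction term equals $2$ for both $\pi$ and $\sigma$, and $g(\pi)-g(\sigma)=0$, so $\overline{\rho}(1,k)$ is the desired conservative $(1,0)$-sequence.

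I expect the main obstacle to be precisely this last verification that $\sigma_1\neq 1$. It is tempting to act within $L$ and declare victory on the cycle counts alone, but one must rule out the reversal accidentally fixing the first element: had $L$ been oriented, acting within it could have produced a $1$-cycle at the left and hence $\sigma_1=1$, dropping the correction term from $2$ to $0$ and turning the move into a $(1,-2)$-sequence rather than a $(1,0)$-sequence. The nonorientation hypothesis is exactly what guarantees that $L$ survives as a leftmost $2$-cycle and keeps the correction term fixed.
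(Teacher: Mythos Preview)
Your proof is correct and uses the same move as the paper: apply the prefix signed reversal $\overline{\rho}(1,k)$ acting on both black edges of the leftmost (nonoriented) $2$-cycle. You spell out more carefully than the paper why this is conservative and why $g$ is unchanged (in particular the check that $\sigma_1\neq 1$), whereas the paper's proof is terser on those points but additionally observes, via Lemma~\ref{lemma:reversals-change-orientation-of-interleaving-cycles}, that the move creates an oriented cycle in the leftmost component---a fact not required by the lemma as stated but essential for the algorithm that follows.
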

\begin{proof}
A prefix signed reversal acting on the leftmost cycle will flip the orientation of any cycle it interleaves (Lemma~\ref{lemma:reversals-change-orientation-of-interleaving-cycles}), thereby guaranteeing the creation of at least one oriented cycle in $BG(\pi)$ and in particular transforming the leftmost component into an oriented component. Note that lower bound~\eqref{eqn:lower-bound-on-psrd} is unaffected by that move.
\end{proof}

\begin{lemma}\label{lemma:simple-permutation-fixing-one}
Let $\pi$ be a simple permutation; if $\pi_1=1$ and $BG(\pi)$ contains no oriented cycle, then $\pi$ admits a conservative $(2,2)$-sequence.
\end{lemma}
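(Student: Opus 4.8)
We have a simple permutation $\pi$ with $\pi_1 = 1$ and $BG(\pi)$ contains no oriented cycle. We want to find a conservative $(2,2)$-sequence — i.e., 2 prefix signed reversals that transform $\pi$ into a simple permutation $\sigma$ with $g(\pi) - g(\sigma) = 2$.

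**Recall the setup:**
- $g(\pi) = n + 1 + c(BG(\pi)) - 2c_1(BG(\pi)) - [\text{0 if } \pi_1=1, \text{ else } 2]$
- Since $\pi_1 = 1$, the correction term is 0, so $g(\pi) = n + 1 + c(BG(\pi)) - 2c_1(BG(\pi))$.
- $\pi$ is simple: all cycles have length ≤ 2.
- No oriented cycle means no proper reversal is available directly.
- $\pi_1 = 1$ means the leftmost black edge $\{\pi'_0, \pi'_1\} = \{0, 1\}$... wait, let me think.

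Actually if $\pi_1 = 1$, then $\pi'_1 = 1, \pi'_2 = 2$. The leftmost black edge is $\{\pi'_0, \pi'_1\} = \{0, 1\}$. The grey edge from 0 goes to $0+1 = 1$. So we have a grey edge $\{0,1\}$ and black edge $\{0,1\}$... that forms a 1-cycle (trivial cycle). So the leftmost cycle is a 1-cycle when $\pi_1 = 1$.

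Let me think about this differently.
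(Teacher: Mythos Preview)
Your proposal is not a proof: it consists of preliminary observations and then trails off with ``Let me think about this differently.'' You have correctly identified that when $\pi_1=1$ the leftmost cycle of $BG(\pi)$ is trivial, and you have written down $g(\pi)$ correctly, but you have not produced the two prefix signed reversals, nor verified that the result is simple, nor computed the change in $g$.

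What is missing is the actual construction. The paper's argument is as follows. Since $\pi\neq\iota$ there is some nontrivial nonoriented component; take its leftmost $2$-cycle $C_1$. By Lemma~\ref{lemma:every-edge-is-interleaved}, $C_1$ interleaves with another $2$-cycle $C_2$ in that component. If the black edges of $C_2$ are $\{\pi'_{2j-2},\pi'_{2j-1}\}$ and $\{\pi'_{2\ell-2},\pi'_{2\ell-1}\}$ with $j<\ell$, apply $\overline{\rho}(1,j-1)$ followed by $\overline{\rho}(1,\ell-1)$. One then checks (see Figure~\ref{fig:orienting-nonoriented-components}) that the resulting permutation $\sigma$ is still simple, that $c(BG(\sigma))=c(BG(\pi))$ and $c_1(BG(\sigma))=c_1(BG(\pi))$, and that $\sigma_1\neq 1$. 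Hence the correction term in $g$ changes from $0$ to $-2$, giving $g(\pi)-g(\sigma)=2$, and the sequence is a conservative $(2,2)$-sequence. You need to supply all of this; at present your write-up contains none of it.
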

\begin{proof}
If $\pi_1=1$ and $BG(\pi)$ contains no oriented $2$-cycle, then any nonoriented component $BG(\pi)$ contains can be transformed into an oriented leftmost component as follows. Pick the leftmost cycle $C_1$ in any nonoriented component; by Lemma~\ref{lemma:every-edge-is-interleaved}, $C_1$ interleaves with another nonoriented cycle, say $C_2$. If $C_2$ contains black edges $\{\pi'_{2j-2},\pi'_{2j-1}\}$ and $\{\pi'_{2\ell-2},\pi'_{2\ell-1}\}$, then applying $\overline{\rho}(1,j-1)$ followed by $\overline{\rho}(1,\ell-1)$ transforms $\pi$ into another simple permutation with an oriented leftmost component, as Figure~\ref{fig:orienting-nonoriented-components} shows.

\begin{figure}[htbp]
 \centering
\begin{tikzpicture}[scale=.57,>=stealth]
    \foreach \p in {1,3,5,7,9}
        \draw (\p,0) -- (\p-1,0);

    \foreach \p/\n/\l in {0/0/$\pi'_0$, 1/a/$\pi'_1$, 2/b/$\pi'_{2i-2}$, 3/c/$\pi'_{2i-1}$, 4/d/$\pi'_{2j-2}$, 5/e/$\pi'_{2j-1}$, 6/f/$\pi'_{2k-2}$, 7/g/$\pi'_{2k-1}$, 8/h/$\pi'_{2\ell-2}$, 9/i/$\pi'_{2\ell-1}$}
        \node[vertex] at (\p,0) [draw,circle] (\n) [label=below:\l] {};

    \foreach \s/\d in {0/a, b/g, c/f, d/i, e/h}
        \draw[gray] (\s) to  [out=90,in=90] (\d);

    \draw (.5,-.7) -- (4.5,-.7);
\begin{scope}[xshift=320pt]
    \foreach \p in {1,3,5,7,9}
        \draw (\p,0) -- (\p-1,0);
    \foreach \s/\d in {0/4, 1/9, 2/6, 3/7, 5/8}
        \draw[gray] (\s,0) to  [out=90,in=90] (\d,0);
    \foreach \p/\n/\l in {0/0/$\pi'_0$, 1/d/$\pi'_{2j-2}$, 2/c/$\pi'_{2i-1}$, 3/b/$\pi'_{2i-2}$, 4/a/$\pi'_{1}$, 5/e/$\pi'_{2j-1}$, 6/f/$\pi'_{2k-2}$, 7/g/$\pi'_{2k-1}$, 8/h/$\pi'_{2\ell-2}$, 9/i/$\pi'_{2\ell-1}$}
        \node[vertex] at (\p,0) [draw,circle] (\n) [label=below:\l] {};

    \draw (.5,-.7) -- (8.5,-.7);
\draw[->] (-1.5,0) -- (-0.5,0);
\end{scope}
\begin{scope}[xshift=320pt,yshift=-100pt]
    \foreach \p in {1,3,5,7,9}
        \draw (\p,0) -- (\p-1,0);
    \foreach \s/\d in {0/5, 1/4, 8/9, 2/6, 3/7}
        \draw[gray] (\s,0) to  [out=90,in=90] (\d,0);
    \foreach \p/\n/\l in {0/0/$\pi'_0$, 1/d/$\pi'_{2\ell-2}$, 2/c/$\pi'_{2k-1}$, 3/b/$\pi'_{2k-2}$, 4/a/$\pi'_{2j-1}$, 5/e/$\pi'_{1}$, 6/f/$\pi'_{2i-2}$, 7/g/$\pi'_{2i-1}$, 8/h/$\pi'_{2j-2}$, 9/i/$\pi'_{2\ell-1}$}
        \node[vertex] at (\p,0) [draw,circle] (\n) [label=below:\l] {};
\draw[->] (-1.5,0) -- (-0.5,0);
\end{scope}
\end{tikzpicture}
\caption{An optimal conservative sequence of prefix signed reversals that transforms a nonoriented component into an oriented leftmost component.}
\label{fig:orienting-nonoriented-components}
\end{figure}
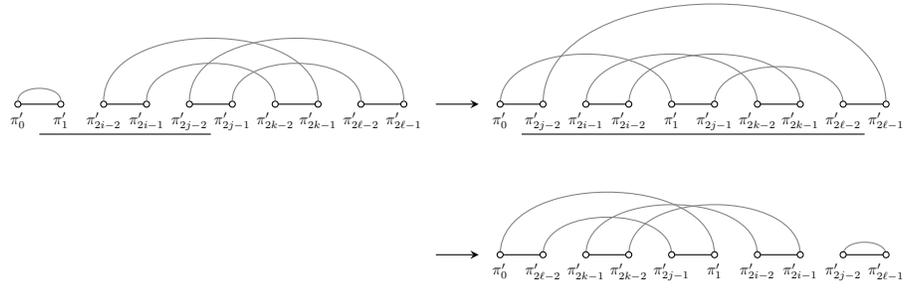

Neither the number of cycles nor the number of $1$-cycles in the breakpoint graph is affected, but the resulting permutation no longer fixes $1$, so lower bound~\eqref{eqn:lower-bound-on-psrd} decreases by $2$.	
\end{proof}

\subsection{Computing $psrd$ for simple permutations}

We have proved the existence of optimal conservative sequences for every simple permutation, except in the case where $\pi_1\neq 1$ and the leftmost component is nonoriented. In this section, we prove that the strategy proposed in Lemma~\ref{lemma:simple-permutation-leftmost-component-is-a-hurdle} is optimal (Lemma~\ref{lemma:josef-lemma-three}), and derive a formula for computing the prefix signed reversal distance of simple permutations (Theorem~\ref{thm:formula-for-psrd-of-simple-permutations}).

We will refer to prefix signed reversals that act on two nontrivial cycles of the breakpoint graph as \emph{merging moves}, and to those that split the leftmost cycle into two cycles, at least one of which is trivial, as \emph{splitting moves}. It can be easily seen from Theorem~\ref{thm:lower-bound-on-psrd} that any sequence of prefix signed reversals that is to outperform the strategy proposed in Lemma~\ref{lemma:simple-permutation-leftmost-component-is-a-hurdle} must consist solely of these types of moves, and eventually lead to an oriented leftmost $2$-cycle, a step that must precede the creation of two new trivial cycles. Therefore, our proof will consist in showing that this strategy will fail to orient the leftmost component.

We have already defined orientation for grey edges (Definition~\ref{def:orientation-of-grey-edges} page~\pageref{def:orientation-of-grey-edges}). We will also need an analogous definition for black edges by \citet{tannier-subquadratic}.

\begin{definition}\label{def:orientation-of-black-edges}
\cite{tannier-subquadratic} A black edge $\{\pi'_{2i}, \pi'_{2i+1}\}$ in $BG(\pi)$ is \emph{oriented} if $\pi_i$ and $\pi_{i+1}$ have opposite signs, and \emph{nonoriented} otherwise.
\end{definition}

\begin{definition}\cite{hannenhalli-transforming}
The smallest interval that contains the leftmost and rightmost elements of a given component $\mathscr C$ in $BG(\pi)$ is denoted by
$$Extent(\mathscr C)=\left[\min_{C\in\mathscr C}\min_{\pi'_i\in C}i,\max_{C\in\mathscr C}\max_{\pi'_j\in C}j \right].$$
\end{definition}

Components can be ordered by inclusion based on their extent. We will be interested mainly in minimal components, defined below.

\begin{definition}\cite{hannenhalli-transforming}\label{def:minimal-component}
 A component $\mathscr C$ of $BG(\pi)$ is \emph{minimal} if every cycle $C$ with $support(C)\subseteq Extent(\mathscr C)$ belongs to $\mathscr C$.
\end{definition}

Graphically speaking, a minimal component is an ``innermost'' one. \citet{tannier-subquadratic} note that $BG(\pi^{-1})$ can be obtained from $BG(\pi)$ by exchanging positions and elements in $\pi'$ as well as edge colours (\ie black (\resp grey) edges in $BG(\pi)$ become grey (\resp black) edges in $BG(\pi^{-1})$). As a consequence, cycles in $BG(\pi)$ and in $BG(\pi^{-1})$ are in one-to-one correspondence, and we denote $C^{-1}$ the cycle in $BG(\pi^{-1})$ onto which a given cycle $C$ in $BG(\pi)$ is mapped. We show below that this mapping extends to whole components of the breakpoint graph.

\begin{lemma}\label{lemma:josef-lemma-one}
A cycle $C$ belongs to a component $\mathscr C$ in $BG(\pi)$ if and only if $C^{-1}$ belongs to $\mathscr C^{-1}$ in $BG(\pi^{-1})$.
\end{lemma}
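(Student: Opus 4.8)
The plan is to read the statement off the correspondence between $BG(\pi)$ and $BG(\pi^{-1})$ recalled just before the lemma, which exchanges positions with elements and black edges with grey edges and induces the cycle bijection $C\mapsto C^{-1}$. Since a component is by definition a connected component of the interleaving graph $H$, the lemma amounts to showing that this bijection sends the partition of the cycles of $BG(\pi)$ into components onto the corresponding partition of $BG(\pi^{-1})$. As the inverse of $\pi^{-1}$ is $\pi$, the two directions are symmetric, so I would prove only one: if $C$ and $D$ belong to the same component of $BG(\pi)$, then $C^{-1}$ and $D^{-1}$ belong to the same component of $BG(\pi^{-1})$.

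Two cycles share a component precisely when they are joined by a chain of pairwise interleavings, so it is enough to treat a single link: assuming $C$ and $D$ interleave in $BG(\pi)$, I would show that $C^{-1}$ and $D^{-1}$ lie in one component of $BG(\pi^{-1})$, and then concatenate the chains obtained from each link. By definition the interleaving of $C$ and $D$ is witnessed by a grey edge of $C$ and a grey edge of $D$ whose supports overlap without either containing the other. Pushing this pair through the correspondence replaces them by a black edge of $C^{-1}$ and a black edge of $D^{-1}$ which span overlapping, non-nested intervals of the elements of $\pi^{-1}$ (the positions of $\pi$). The substantive step is to convert this overlap of two \emph{black} edges into an actual chain of \emph{grey}-edge interleavings in $BG(\pi^{-1})$; I would do so by following the alternating black/grey structure of $C^{-1}$ and $D^{-1}$ outward from the four endpoints of these black edges, invoking Lemma~\ref{lemma:every-edge-is-interleaved} (applied in $BG(\pi^{-1})$) to keep propagating an interleaving until the two cycles are connected.

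The main obstacle is exactly this mismatch of bookkeeping: interleaving is expressed through grey edges and position-supports, while the correspondence hands back black edges together with intervals of elements, so a single interleaving in $BG(\pi)$ need not correspond to a single interleaving in $BG(\pi^{-1})$. One therefore cannot hope that $H(\pi)$ and $H(\pi^{-1})$ are isomorphic and must argue only that their connected components agree. The delicate phenomenon is that the position/element swap can turn a crossing (interleaving) pair of intervals into a nested pair, and nested intervals do not interleave; the risk is thus that the chain of interleavings witnessing that $C$ and $D$ share a component in $BG(\pi)$ fails to transport to a chain in $BG(\pi^{-1})$. The heart of the proof is to show that it nonetheless does, i.e.\ that the overlap of elements produced by the swap always propagates, via the alternating structure and Lemma~\ref{lemma:every-edge-is-interleaved}, to a genuine grey-edge interleaving chain, so that no component can be split and no two components merged in passing from $\pi$ to $\pi^{-1}$. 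I expect this propagation, established by a case analysis on how the endpoints of the relevant edges sit within their cycles, to be the only genuinely technical part.
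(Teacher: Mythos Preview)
Your reduction is sound: by the involution $\pi\mapsto\pi^{-1}$ it suffices to show that a single interleaving $C\leftrightarrow D$ in $BG(\pi)$ forces $C^{-1}$ and $D^{-1}$ into a common component of $BG(\pi^{-1})$. You also correctly isolate the difficulty, namely that the position/element swap turns the witnessing grey edges into black edges, so the ``overlap'' you obtain in $BG(\pi^{-1})$ is not an interleaving in the sense of the definition.

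The gap is in how you propose to close that difficulty. Invoking Lemma~\ref{lemma:every-edge-is-interleaved} only tells you that each grey edge of a nontrivial cycle interleaves with \emph{some} grey edge; it gives no control over \emph{which} edge, and in particular the interleaving partner may belong to the same cycle or to a third cycle unrelated to the one you are trying to reach. Repeated application therefore does not constitute a terminating or directed ``propagation'' toward connecting $C^{-1}$ to $D^{-1}$, and the promised case analysis on endpoint positions is not sketched in enough detail to see why it would succeed. As stated, the plan is a hope rather than an argument.

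The paper avoids this edge-by-edge chase entirely. It works globally with \emph{extents} of components and proceeds by induction on the inclusion order, handling minimal (innermost) components first. The key device is a third edge colour, the \emph{white} edges $\{\pi'_{2i-1},\pi'_{2i}\}$, which are identical in $BG(\pi)$ and $BG(\pi^{-1})$. The white--grey alternating path visits vertices in natural order, so the extent $[i,j]$ of a minimal component $\mathscr C$ cuts out a contiguous white--grey subpath; under the colour swap this becomes a contiguous white--black subpath in $BG(\pi^{-1})$, which immediately shows that the cycles of $\mathscr C^{-1}$ cannot interleave with anything outside. That the cycles of $\mathscr C^{-1}$ remain a \emph{single} component then follows by applying the same observation back from $\pi^{-1}$ to $\pi$. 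This extent/minimality argument is what replaces your propagation step, and it is both shorter and free of case analysis.
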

\begin{proof}
We examine components by their inclusion order, starting with minimal ones and removing them as we go to proceed by induction. 

We refer to the pairs $\{\pi'_{2i-1},\pi'_{2i}\}$ for $1 \leq i \leq n$ as \emph{white edges}, which form the same set in both $BG(\pi)$ and $BG(\pi^{-1})$. The alternating path made of white and grey (\resp black) edges in $BG(\pi)$ will be referred to as the \emph{WG-path} (\resp \emph{WB-path}), which when starting with the leftmost vertex of $BG(\pi)$ visits the vertices of $BG(\pi)$ in the natural order (\resp in the order in which they appear in $\pi'$).

We now show that a minimal component $\mathscr C$ in $BG(\pi)$ corresponds to a minimal component $\mathscr C^{-1}$ in $BG(\pi^{-1})$. If $Extent(\mathscr C)=[i, j]$, then vertices $\{\pi'_{2i}, \ldots, \pi'_{2j-1}\}$ induce a sub-path of the WG-path in $BG(\pi)$, which is mapped onto a WB-path in $BG(\pi^{-1})$ and implies that cycles of $\mathscr C^{-1}$ do not interleave with cycles outside $\mathscr C^{-1}$. To see that the cycles we obtain in that way belong to the same connected component in $BG(\pi^{-1})$, assume on the contrary that some cycles were mapped onto an additional minimal component, say $\mathscr C'$, in $BG(\pi^{-1})$. By the above argument, cycles in $\mathscr C'^{-1}$ do not interleave with other cycles in $BG((\pi^{-1})^{-1}) = BG(\pi)$, a contradiction.
\end{proof}

The following result will also be useful.

\begin{lemma}\label{lemma:josef-lemma-two}
A component $\mathscr C$ in $BG(\pi)$ is oriented if and only if it contains an oriented black edge.
\end{lemma}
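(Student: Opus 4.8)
The plan is to prove both implications of the biconditional, with the forward direction ($\mathscr C$ oriented $\Rightarrow$ contains an oriented black edge) being essentially routine and the reverse direction being where the real work lies. For the forward direction, recall that by Definition~\ref{def:components-and-orientation-in-BG}, a component is oriented precisely when one of its cycles is oriented, i.e.\ contains an oriented \emph{grey} edge in the sense of Definition~\ref{def:orientation-of-grey-edges}. So I would first establish that any cycle possessing an oriented grey edge must also possess an oriented black edge in the sense of Definition~\ref{def:orientation-of-black-edges}. The natural tool is a parity argument: the support of an oriented grey edge contains an odd number of elements, and I expect to translate this parity condition on intervals into a statement about the signs $\pi_i, \pi_{i+1}$ at the black-edge endpoints. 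Concretely, I would track how the parity of the number of negative elements (or of positions) inside the support relates to whether the black edges bounding it join same-parity or opposite-parity vertices $\pi'_{2i}, \pi'_{2i+1}$, and thereby force at least one black edge to be oriented.

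For the reverse direction, suppose $\mathscr C$ contains an oriented black edge; I want to conclude $\mathscr C$ is oriented, i.e.\ some cycle of $\mathscr C$ is oriented. The cleanest approach is to exploit the $BG(\pi) \leftrightarrow BG(\pi^{-1})$ duality from Lemma~\ref{lemma:josef-lemma-one}, under which black and grey edges swap roles and components correspond to components. An oriented black edge in $BG(\pi)$ should correspond to an oriented grey edge in $BG(\pi^{-1})$ (both notions being governed by the same parity-of-support condition, now with colours exchanged), which means the corresponding cycle in $\mathscr C^{-1}$ is oriented, hence $\mathscr C^{-1}$ is an oriented component of $BG(\pi^{-1})$. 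Applying the forward direction (already proved) to $BG(\pi^{-1})$ would then give an oriented black edge of $\mathscr C^{-1}$, but that only returns me to a black edge; so instead I would set up the equivalence symmetrically so that the forward direction, applied in $BG(\pi^{-1})$, yields directly that $\mathscr C^{-1}$ contains an oriented grey edge $\Leftrightarrow$ $\mathscr C$ contains an oriented black edge, closing the loop. Alternatively, and perhaps more transparently, I would argue directly: if every cycle of $\mathscr C$ were nonoriented (no oriented grey edge), I would show by the same parity bookkeeping that every black edge of $\mathscr C$ is nonoriented, contradicting the hypothesis.

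The main obstacle I anticipate is making the parity argument fully rigorous and colour-symmetric. Definitions~\ref{def:orientation-of-grey-edges} and~\ref{def:orientation-of-black-edges} look superficially different in form—one counts elements in a support interval, the other compares signs of consecutive $\pi_i$—so the crux is to exhibit a single underlying parity invariant that specializes to both, and to verify it behaves correctly under the position/element swap of the duality. I would handle this by carefully using the encoding $\pi_i \mapsto (2\pi_i-1, 2\pi_i)$ or $(2|\pi_i|, 2|\pi_i|-1)$: the sign of $\pi_i$ is recorded in the \emph{order} of the pair $\{2|\pi_i|-1, 2|\pi_i|\}$, so "$\pi_i, \pi_{i+1}$ have opposite signs" is a parity statement about how many sign-flips occur across a black edge, which I expect to match the parity of the support count after the duality exchanges positions with values.

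Once this invariant is pinned down, both directions follow, and the symmetry of $BG(\pi)$ and $BG(\pi^{-1})$ under Lemma~\ref{lemma:josef-lemma-one} lets me avoid proving each colour case from scratch. The remaining bookkeeping—confirming that an oriented edge lies in a cycle of $\mathscr C$ rather than merely somewhere in the graph—is immediate from Lemma~\ref{lemma:josef-lemma-one}, which guarantees the component correspondence, so no cycle or edge escapes its component under the duality.
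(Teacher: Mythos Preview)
Your plan has a genuine gap in the forward direction. You propose to show that \emph{any cycle possessing an oriented grey edge must also possess an oriented black edge}, and to obtain this from a parity argument on the support. That cycle-level statement is false. Take $\pi=\langle 3,\,-1,\,-4,\,2\rangle$, so that
\[
\pi'=(0,\,5,\,6,\,2,\,1,\,8,\,7,\,3,\,4,\,9).
\]
The $3$-cycle through vertex $0$ uses black edges at position pairs $(0,1)$, $(4,5)$, $(8,9)$, corresponding to the sign pairs $(+,+)$, $(-,-)$, $(+,+)$; all three are nonoriented. Yet its grey edge $\{0,1\}$ sits at positions $0$ and $4$ (support of odd size~$5$), so the cycle is oriented. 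The oriented black edges of the component live in the \emph{other} cycle (the $2$-cycle on $\{2,3,6,7\}$). Parity bookkeeping at the cycle level can at best tell you that the numbers of oriented grey and oriented black edges in a cycle have the same parity; it cannot force an oriented black edge into the same cycle.

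Because your reverse direction (via the $BG(\pi)\leftrightarrow BG(\pi^{-1})$ duality of Lemma~\ref{lemma:josef-lemma-one}) relies on first having the forward direction, it does not stand on its own either; and the ``alternative'' contrapositive you sketch is again the cycle-level parity argument in disguise. The paper avoids this trap by working at the \emph{component} level from the start: it peels off minimal components one at a time and uses the fact (from \citet{bergeron-common}) that the two elements framing a minimal component have the same sign. In a minimal component, having two elements of opposite signs therefore forces two \emph{adjacent} elements of opposite signs---because nothing from another component is nested inside---and that is precisely an oriented black edge. This adjacency step is the missing idea in your proposal; once you have it for minimal components, the induction (remove the minimal component, renumber, repeat) finishes the proof.
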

\begin{proof}
As in the proof of Lemma~\ref{lemma:josef-lemma-one}, we examine components by their inclusion order, starting with minimal ones and removing them as we go to proceed by induction. 
Recall (Definition~\ref{def:components-and-orientation-in-BG} page~\pageref{def:components-and-orientation-in-BG}) that $\mathscr C$ is oriented if it contains an oriented grey edge, which corresponds to a pair of elements of opposite signs in $\pi$. If $\mathscr C$ is minimal, then it contains two elements of opposite signs in $\pi$ if and only if it contains a pair of \emph{adjacent} elements of opposite signs, which themselves correspond to an oriented black edge (Definition~\ref{def:orientation-of-black-edges}). 

As observed by~\citet{bergeron-common}, the elements of $\pi$ that ``frame'' $\mathscr C$ (\ie $\pi_{i/2}$ and $\pi_{(j+1)/2}$, if $Extent(\mathscr C)=[i,j]$) have the same sign. We can then remove $\mathscr C$ from $BG(\pi)$, renumber the elements appropriately, and handle the remaining components in the same way.
\end{proof}

Since oriented black (\resp grey) edges in $BG(\pi)$ become oriented grey (\resp black) edges in $BG(\pi^{-1})$, Lemma~\ref{lemma:josef-lemma-two} implies that the orientation of components in $BG(\pi)$ is also preserved in $BG(\pi^{-1})$.

\begin{lemma}\label{lemma:josef-lemma-three}
Let $\pi$ be simple permutation which does not fix $1$ and whose leftmost component is nonoriented. If a sequence of merging and splitting moves transforms $\pi$ into a permutation $\sigma$ whose leftmost cycle is a $2$-cycle, then $\sigma$ is simple and the leftmost component of $BG(\sigma)$ is nonoriented.
\end{lemma}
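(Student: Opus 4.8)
The plan is to prove the two conclusions---simplicity of $\sigma$ and nonorientedness of its leftmost component---separately, since they rest on different invariants maintained along the sequence. Throughout I would use that every move is a prefix signed reversal $\overline{\rho}(1,k)$, so it always acts on the leftmost black edge $\{\pi'_0,\pi'_1\}$ together with exactly one other black edge $\{\pi'_{2k},\pi'_{2k+1}\}$, and I would reduce all orientation questions to black edges by Lemma~\ref{lemma:josef-lemma-two}, together with the reading of Definition~\ref{def:orientation-of-black-edges} that a black edge is oriented precisely when the two corresponding consecutive elements of $\pi$ carry opposite signs.

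For simplicity the key structural observation is that \emph{only the leftmost cycle can ever have length greater than $2$}. Indeed, a merging move fuses the leftmost cycle with another nontrivial cycle into a single cycle that still contains $\{\pi'_0,\pi'_1\}$ and hence remains the leftmost cycle, so the absorbed cycle simply disappears; a splitting move cuts the leftmost cycle into a trivial cycle and a remainder. In any sequence reaching a permutation with a nontrivial leftmost cycle, the leftmost black edge must stay in the nontrivial piece after each split: were it ever to fall into the trivial piece, the leftmost cycle would become a $1$-cycle, and then neither a merging move (which needs a nontrivial leftmost cycle) nor a splitting move (which must split the leftmost cycle) could be applied, so the sequence could not continue to $\sigma$. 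Consequently every non-leftmost cycle is either an untouched original cycle---of length at most $2$ because $\pi$ is simple---or a trivial cycle produced by a split, and no non-leftmost cycle is ever enlarged. Therefore, the moment the leftmost cycle becomes a $2$-cycle, \emph{all} cycles have length at most $2$ and $\sigma$ is simple.

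For the orientation statement I would track the orientation of the leftmost black edge---equivalently, by Definition~\ref{def:orientation-of-black-edges}, the sign of the first element---alongside the parity of the length of the leftmost cycle. Computing the effect of $\overline{\rho}(1,k)$ shows that a prefix signed reversal preserves the orientation of every black edge strictly inside the reversed block (merely reversing their order) and only recomputes the two boundary edges, the leftmost one becoming oriented exactly when the element at position $k$ was positive. The aim is the invariant that the leftmost black edge is nonoriented if and only if the leftmost cycle has even length; this holds at the start, where the leftmost cycle is a nonoriented $2$-cycle (note $\pi_1>0$, since $\pi_1\neq 1$ and the leftmost component is nonoriented), and one checks it survives both a splitting move (which flips the parity of the leftmost cycle's length) and a merging move (which absorbs a non-leftmost $2$-cycle, leaving the parity unchanged). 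When the sequence lands on a leftmost $2$-cycle, the invariant forces its leftmost black edge to be nonoriented; since $\sigma$ is simple and its leftmost $2$-cycle is, by the remark following Lemma~\ref{lemma:every-edge-is-interleaved}, interleaved with another nontrivial cycle, I would then upgrade this edgewise statement to the whole component via Lemma~\ref{lemma:josef-lemma-two}, concluding that the leftmost component of $BG(\sigma)$ contains no oriented black edge and is therefore nonoriented.

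The main obstacle is exactly this last invariant and its upgrade to the component level. Checking that the coupling between the sign of the first element and the parity of the leftmost cycle's length survives a merging move---which, by Lemma~\ref{lemma:reversals-change-orientation-of-interleaving-cycles}, also flips the orientation of every cycle interleaving the two it fuses---requires a careful case analysis, and the intermediate permutations are in general neither simple nor equipped with a nonoriented leftmost component, so the invariant cannot be phrased as ``nonoriented at every step.'' I expect the cleanest route to the component-level conclusion is to pass to $BG(\sigma^{-1})$ through Lemma~\ref{lemma:josef-lemma-one} and the black--grey orientation duality recorded after Lemma~\ref{lemma:josef-lemma-two}, where nonorientedness becomes a statement about the parity of grey-edge supports and is thus governed directly by the positional bookkeeping of the reversals.
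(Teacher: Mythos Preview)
Your simplicity argument is sound and essentially the paper's, only spelled out more carefully.

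The orientation argument, however, has a genuine gap. The invariant you propose---that the leftmost black edge is nonoriented if and only if the leftmost cycle has even length---already fails at the first merging move in the simplest instance. Take a minimal nonoriented leftmost component consisting of exactly two interleaving $2$-cycles; by Lemma~\ref{lemma:josef-lemma-two} every black edge in its extent is nonoriented, so consecutive elements of $\pi$ share a sign and hence $\pi_1,\ldots,\pi_k$ are all positive. A merging move $\overline{\rho}(1,k)$ absorbing the second $2$-cycle then has $\pi_k>0$, so the new first element $-\pi_k$ is negative and the leftmost black edge becomes oriented, while the leftmost cycle now has length $4$, still even. More fundamentally, even a correct invariant of this kind would control only a single black edge, whereas invoking Lemma~\ref{lemma:josef-lemma-two} requires showing that \emph{every} black edge in the leftmost component of $BG(\sigma)$ is nonoriented; local tracking of the leftmost edge cannot deliver that.

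The route you gesture at in your final paragraph---passing to $BG(\sigma^{-1})$ via Lemma~\ref{lemma:josef-lemma-one} and the black/grey duality---is precisely the paper's argument, but you have not carried it out. The paper's point is that each $2$-cycle of $BG(\sigma^{-1})$ sits on the same quadruple of positions as a $2$-cycle of $BG(\pi^{-1})$, with the same orientation and no new interleavings; hence the leftmost component of $BG(\sigma^{-1})$, and therefore of $BG(\sigma)$, is nonoriented. Your parity invariant is a detour that neither holds nor, by itself, reaches the conclusion; the global positional bookkeeping in the inverse breakpoint graph is the missing ingredient.
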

\begin{proof}
Splitting moves extract $1$-cycles from the leftmost cycle, while merging moves merge it with $2$-cycles, so clearly $\sigma$ is simple. To prove that its leftmost component is nonoriented, we first show that the sets of black edges in $BG(\pi)$ and in $BG(\sigma)$ differ only in the black edges that belonged to $2$-cycles in $BG(\pi)$ and became $1$-cycles in $BG(\sigma)$. 

Grey edges in $BG(\sigma^{-1})$ connect pairs of elements that appear at the same positions as in $BG(\pi^{-1})$, except for edges that originally belonged to $2$-cycles that were turned into $1$-cycles. Each $2$-cycle in $BG(\sigma^{-1})$ corresponds to a $2$-cycle in $BG(\pi^{-1})$ on the same quadruple of positions (since by the transformation described right after Definition~\ref{def:minimal-component}, black and grey edges are exchanged in the process of transforming $BG(\pi)$ into $BG(\pi^{-1})$ and conversely). The orientation of any $2$-cycle in $BG(\sigma^{-1})$ is the same as the orientation of the corresponding $2$-cycle in $BG(\pi^{-1})$, and there is no new pair of interleaving $2$-cycles in $BG(\sigma^{-1})$. This together with Lemma~\ref{lemma:josef-lemma-two} implies that the leftmost component of $BG(\sigma)$ is nonoriented.
\end{proof}

We now have everything we need to prove a formula for computing the prefix signed reversal distance of simple permutations.

\begin{theorem}\label{thm:formula-for-psrd-of-simple-permutations}
For every simple permutation $\pi$ in $S^{\pm}_n$, we have:
$$psrd(\pi)
=
n+1+c(BG(\pi))-2c_1(BG(\pi))+t(\pi)-\left\{
\begin{array}{ll}
0 & \mbox{if } \pi_1= 1 \\
2 & \mbox{otherwise}
\end{array}
\right.,
$$
where $t(\pi)=1$ if $\pi_1\neq 1$ and the leftmost component of $BG(\pi)$ is nonoriented, and $0$ otherwise.
\end{theorem}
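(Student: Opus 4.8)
The plan is to prove the two inequalities $psrd(\pi)\geq g(\pi)+t(\pi)$ and $psrd(\pi)\leq g(\pi)+t(\pi)$ separately, where $g(\pi)$ denotes the right-hand side of lower bound~\eqref{eqn:lower-bound-on-psrd}. The lower bound of Theorem~\ref{thm:lower-bound-on-psrd} already gives $psrd(\pi)\geq g(\pi)$, so the two cases that remain are the upper bound in general and the improved lower bound $psrd(\pi)\geq g(\pi)+1$ in the case $t(\pi)=1$.

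For the upper bound I would argue by induction on the quantity $g(\pi)+t(\pi)$, which is a nonnegative integer vanishing exactly when $\pi=\iota$. Given a simple $\pi\neq\iota$, I exhibit a conservative sequence of $k$ prefix signed reversals producing a simple $\sigma$ with $(g+t)(\pi)-(g+t)(\sigma)=k$; since $\sigma$ is simple, the induction hypothesis gives $psrd(\sigma)=g(\sigma)+t(\sigma)$, whence $psrd(\pi)\leq k+psrd(\sigma)=g(\pi)+t(\pi)$. The sequence is chosen in three cases. If $t(\pi)=1$, Lemma~\ref{lemma:simple-permutation-leftmost-component-is-a-hurdle} supplies a conservative $(1,0)$-sequence that orients the leftmost component, so $g$ is unchanged while $t$ drops to $0$, i.e. $k=1$ and $(g+t)$ decreases by $1$. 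If $t(\pi)=0$ and $BG(\pi)$ has an oriented cycle, I sort the (oriented) leftmost component entirely, one proper reversal at a time, each step being optimal and conservative by Lemma~\ref{lemma:mimicking-proper-reversals-is-optimal-for-simple-permutations}; because the leftmost black edge $\{\pi'_0,\pi'_1\}$ lies in the leftmost component, sorting that component turns this edge into a $1$-cycle, so the resulting $\sigma$ fixes $1$ and hence has $t(\sigma)=0$, while the total drop in $g$ equals the number of reversals used. Finally, if $t(\pi)=0$ and $BG(\pi)$ has no oriented cycle, then $\pi_1=1$ (otherwise $t(\pi)=1$), and Lemma~\ref{lemma:simple-permutation-fixing-one} supplies a conservative $(2,2)$-sequence whose outcome has an oriented leftmost component, so $t(\sigma)=0$ and $(g+t)$ decreases by $2=k$.

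For the lower bound it remains to show that $t(\pi)=1$ forces $psrd(\pi)\geq g(\pi)+1$. Suppose, for contradiction, that some sorting sequence has length exactly $g(\pi)$. A single prefix signed reversal changes $c(BG(\pi))$ by at most one and $c_1(BG(\pi))$ by at most two, and a short case check on these quantities shows that it can decrease $g$ by at most $1$; since the sequence drives $g$ from $g(\pi)$ down to $g(\iota)=0$ in $g(\pi)$ steps, every reversal must decrease $g$ by exactly $1$. As noted just before Lemma~\ref{lemma:josef-lemma-three}, the only moves with this property are merging moves and splitting moves. To finish sorting, the leftmost black edge must eventually be turned into a $1$-cycle by a splitting move acting on the leftmost cycle; the same bookkeeping shows that such a split keeps $\Delta g=-1$ only when the leftmost cycle is a $2$-cycle (splitting a longer leftmost cycle off a $1$-cycle would give $\Delta g=+1$), and splitting a $2$-cycle into two $1$-cycles requires it to be oriented. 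But by Lemma~\ref{lemma:josef-lemma-three}, after any prefix of such a sequence whose leftmost cycle is a $2$-cycle, the leftmost component—and hence that $2$-cycle—is nonoriented, so no proper reversal can split it. This contradiction shows that no length-$g(\pi)$ sequence exists, giving $psrd(\pi)\geq g(\pi)+1=g(\pi)+t(\pi)$.

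Combining the two inequalities yields the claimed formula. I expect the delicate part to be the lower bound: one must verify carefully that $g$ cannot drop by more than $1$ per prefix reversal and that an optimal sequence is confined to merging and splitting moves, so that at the moment the leftmost cycle is finally resolved it is a $2$-cycle to which Lemma~\ref{lemma:josef-lemma-three} applies. The upper bound is comparatively routine once one observes that sorting the leftmost oriented component automatically restores $\pi_1=1$, keeping $t$ under control throughout the induction.
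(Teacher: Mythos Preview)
Your argument tracks the paper's proof closely: the upper bound is assembled from Lemmas~\ref{lemma:mimicking-proper-reversals-is-optimal-for-simple-permutations}, \ref{lemma:simple-permutation-leftmost-component-is-a-hurdle}, and \ref{lemma:simple-permutation-fixing-one}, and the lower bound is exactly the paper's appeal to Lemma~\ref{lemma:josef-lemma-three}, just unpacked in more detail. The overall structure is correct.

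There is, however, a genuine gap in your second upper-bound case. When $t(\pi)=0$ and $BG(\pi)$ has an oriented cycle, you propose to ``sort the (oriented) leftmost component''. But $t(\pi)=0$ does \emph{not} imply that the leftmost component is oriented: it could instead be that $\pi_1=1$, in which case the leftmost component is the single trivial cycle on $\{0,1\}$, which is neither oriented nor sortable (it is already a $1$-cycle), and your induction stalls. You must split this case. If $\pi_1\neq 1$, then $t(\pi)=0$ does force the leftmost component to be oriented, and your argument goes through. If $\pi_1=1$, sort instead any oriented component $\mathscr C$ (which exists by hypothesis); every proper reversal used there has $i>1$, so the mimicking $3$-sequences of Lemma~\ref{lemma:mimicking-proper-reversals-is-optimal-for-simple-permutations} leave position $1$ fixed, and the resulting $\sigma$ still has $\sigma_1=1$ and hence $t(\sigma)=0$. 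With this fix the induction is sound.

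One further subtlety worth flagging in your lower bound: the claim (inherited from the discussion before Lemma~\ref{lemma:josef-lemma-three}) that every $\Delta g=-1$ move is a merging or splitting move is not literally true, since merging the leftmost $1$-cycle into a nontrivial cycle also achieves $\Delta g=-1$. This does not break the argument, because such a move requires $\pi_1=1$, which in an optimal sequence can only arise from a prior split of an oriented leftmost $2$-cycle; looking at the \emph{first} such split, all preceding moves are genuine merging/splitting moves, so Lemma~\ref{lemma:josef-lemma-three} applies and yields the contradiction. You correctly anticipated that this part needed care.
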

\begin{proof}
The upper bound follows from the fact that there exists an optimal conservative sequence for dealing with every simple permutation, except when $\pi_1\neq 1$ and $BG(\pi)$ contains no oriented cycle; however, a single prefix signed reversal turns the leftmost component into an oriented component (Lemma~\ref{lemma:simple-permutation-leftmost-component-is-a-hurdle}). 
This situation cannot occur more than once in the sorting process, since once the leftmost component has been sorted, either the resulting permutation is $\iota$ or we can sort every remaining oriented component of $BG(\pi)$ optimally -- or create a leftmost oriented component in $BG(\pi)$ if no oriented component exists (Lemma~\ref{lemma:simple-permutation-fixing-one}).

Finally, if $\pi_1\neq 1$ and the leftmost component of $BG(\pi)$ is nonoriented, then no sorting sequence can outperform the strategy proposed in Lemma~\ref{lemma:simple-permutation-leftmost-component-is-a-hurdle} (see Lemma~\ref{lemma:josef-lemma-three}), which implies the desired lower bound and completes the proof.
\end{proof}

\subsection{The sorting algorithm}

Algorithm~\ref{algorithm:SimpleBurntPancakeFlipping} outlines how to sort simple permutations by prefix signed reversals in polynomial time. \citet{tannier-advances} cover step $3$ in details (they sort oriented components using arbitrary signed reversals, but as we have seen these can be mimicked by optimal sequences of prefix signed reversals (Lemma~\ref{lemma:mimicking-proper-reversals-is-optimal-for-simple-permutations})), while step $5$ can be achieved either by applying a reversal on the leftmost cycle, if $\mathcal D$ is the leftmost component (Lemma~\ref{lemma:simple-permutation-leftmost-component-is-a-hurdle}), or by applying the $2$-move sequence proposed in Lemma~\ref{lemma:simple-permutation-fixing-one}. The algorithm can be implemented so as to run in $O(n^{3/2})$ time (see \citet{tannier-advances,han-improving}), while the distance can be computed in $O(n)$ time (see \citet{bader-linear}).

\begin{algorithm}
\caption{{\sc SimpleBurntPancakeFlipping($\pi$)}}
\label{algorithm:SimpleBurntPancakeFlipping}
\algorithmicinputdata a simple permutation $\pi$

\algorithmicresult the identity permutation
\begin{algorithmic}[1]
\WHILE {$\pi\neq\iota$}
    \IF {$BG(\pi)$ contains an oriented component $\mathcal C$}
        \STATE sort $\mathcal C$;
    \ELSE
        \STATE orient any nonoriented component $\mathcal D$;
    \ENDIF
\ENDWHILE
\end{algorithmic}
\end{algorithm}

\section{Conclusions}

We proved a new lower bound on the minimum number of prefix signed reversals needed to sort any signed permutation of $n$ elements, whereas the exact computation of that number remains an open problem. Using this lower bound, we were able to show that an important class of permutations, known as ``simple permutations'', could be sorted in polynomial time, and proposed both a sorting algorithm and a formula for computing the minimum number of required operations.

\citet{hannenhalli-transforming} proved that every permutation $\pi$ could be transformed into a simple permutation $\tilde{\pi}$ in such a way that $srd(\pi)=srd(\tilde{\pi})$ (see \citet{gog-fast} for a $O(n)$ time algorithm for transforming $\pi$ into $\tilde{\pi}$, and a $O(n\log n)$ time algorithm for recovering the original permutation). Unfortunately, the transformation does not preserve the prefix signed reversal distance, as shown by the following counter-example: if $\pi=\langle 2\ 1\rangle$, then the corresponding simple permutation is $\tilde{\pi}=\langle 3\ 2\ 1\rangle$, but it can be verified that $psrd(\pi)=3$ and $psrd(\tilde{\pi})=5$. Therefore, Algorithm~\ref{algorithm:SimpleBurntPancakeFlipping} cannot immediately be used to sort an arbitrary permutation, but since every sequence of signed reversals on a simple permutation can be used to sort the original permutation~\cite{hannenhalli-transforming}, we have $psrd(\pi)\leq psrd(\tilde{\pi})$. 
Moreover, we believe that our contributions should be useful for designing improved approximation or exact algorithms for solving the burnt pancake flipping problem, as well as for getting insight into its computational complexity. The unsigned version of sorting by prefix reversals (\ie the original pancake flipping problem) may also benefit from our results, since both variants are strongly connected (see \citet{hannenhalli-tocut} for more details).

\bibliographystyle{elsarticle-num-names}
\bibliography{burnt-pancakes-polynomial}

\end{document}